\documentclass[fleqn]{article}
\usepackage[left=2.5cm,right=2.5cm,top=1.5cm,bottom=1.5cm]{geometry}
\usepackage{float}
\usepackage{graphicx} % Required for inserting images
\usepackage{amsfonts}
\usepackage{amsmath}
\usepackage{lipsum}
\usepackage{longtable}
\usepackage{tabularx}
\usepackage{tabu}
\usepackage{mathtools}
\usepackage{stmaryrd}
\usepackage{amssymb}
\usepackage{amsthm}
\usepackage{tikz}
\usepackage{nameref}
\usepackage{multirow}
\usepackage{subcaption}

\usetikzlibrary{positioning, arrows.meta}
\usepackage{authblk} % for authors and affiliation
\usepackage[
authordate,
bibencoding=auto,
strict,
backend=biber,
maxcitenames=2,
maxbibnames=10,
uniquelist=false,
uniquename=init,
sorting=debug]{biblatex-chicago} 
\usepackage{newtxtext,newtxmath}

\AtEveryBibitem{%
  \clearfield{month}%
  \clearfield{isbn}%
  \clearfield{issn}%
  \clearfield{url}%
  \clearfield{Accessed}% <- removes "Accessed" date
  \clearfield{urldate}% <- removes "Accessed" date
  \clearfield{doi}%
  \clearfield{note}% <- removes "Publisher: ..." or other comments
  \clearfield{publisher}%
}

\newcounter{remark}
\newcommand{\remark}[1]{%
  \refstepcounter{remark}%
  \paragraph{Remark~\theremark}\label{#1}%
}

\newcommand{\RemarkRef}[1]{Remark~\ref{#1}}

\renewbibmacro*{name:andothers}{%
  \ifboolexpr
    {
      test {\ifnumequal{\value{listcount}}{\value{liststop}}}
      and
      test \ifmorenames
    }
    {
      \ifnumgreater{\value{liststop}}{1}
        {\finalandcomma}
        {}%
      \andothersdelim
      \bibstring{andothers}%
    }
    {}%
}

\newtheorem{theorem}{Theorem}[section]
\newtheorem{corollary}{Corollary}[section]

\newcounter{assumpA}
\renewcommand{\theassumpA}{A\arabic{assumpA}}

\newcommand{\assumpA}[1]{%
  \refstepcounter{assumpA}% advance counter and make ref labelable
  \textbf{(\theassumpA)}\label{#1}\,%
}

\newcommand{\assumpref}[1]{\ref{#1}}
\newcommand{\Assumpref}[1]{(\ref{#1})}  % for parentheses version

\title{Strategic Optimization of Bus Systems with Stochastic Ridership}

\author[1,*]{Haoran Zhao}
\author[1]{Andres Fielbaum}
\affil[1]{\small School of Civil Engineering, University of Sydney}
\affil[*]{\small Corresponding author email address: haoran.zhao@sydney.edu.au}\affil[*]{\small 
Corresponding author address: The University of Sydney, NSW 2006, Australia}
\date{}
\begin{document}
\maketitle
\begin{abstract}
  In global metropolitan areas, public transport benefits from bus systems. Bus design widely applies theoretical models, which typically assume static ridership. However, ridership randomness exists, and lack of attention to it might lead to wrong design decisions. Moving beyond static ridership, we extend the traditional single-line model with stochastic ridership. Further, to optimize the bus system under various ridership randomness, we introduce a hybrid model that combines conventional buses (CBs) and flexible buses (FBs). In both models, we apply a continuous approximation approach and find that: 1) Bus capacity increases with ridership randomness in the extended single-line model; 2) The hybrid model exhibits a binary state: either CBs serve the ridership with rejections under low ridership randomness, or CBs serve the majority while FBs serve the minority with rejections under high ridership randomness. Our findings establish a link between bus system design and ridership randomness, contributing to a more adaptive and efficient public transport framework.
\end{abstract}

\noindent\textbf{Keywords:}
Stochastic ridership; Bus system design; Flexible buses; Continuous approximation

\section{Introduction}
\subsection{Motivation}
Public transport supports the daily lives of millions of residents in a city, and buses are typically a backbone component. Accurately representing the travel demand is essential for planning and improving the design of the bus system \parencite{yu_analysing_2017}. Theoretical models have a long history of providing strategic insights to determine how to design bus systems for different demand patterns. These models are usually based on simplified representations of the demand and network to produce analytical results.

Previous theoretical models typically assume travel demand without randomness (e.g., \parencite{kim_integration_2014,sipetas_continuous_2021,calabro_adaptive_2023,fielbaum_relationship_2024}), i.e., the demand volumes are known a priori and deterministic. This assumption limits our understanding of public transport design \parencite{horcher_review_2021}, as ridership can be actually quite random \parencite{song_limits_2010}. Let us illustrate this using smartcard data from Canberra, Australia. When trips are aggregated into five-minute intervals, the trip counts vary significantly, even considering the same origin-destination pair and time period. This is shown in Figure \ref{fig:Canberra}, where we observe:

\begin{enumerate}
    \item Every curve is far from being stable, even though we depict lapses of 5 minutes. That is, Figure \ref{fig:Canberra} suggests that the demand changes significantly every 5 minutes, which can mostly be explained by randomness.
    \item Different working days also present relevant differences.
\end{enumerate}

\begin{figure}[htbp]    \includegraphics[width=\linewidth]{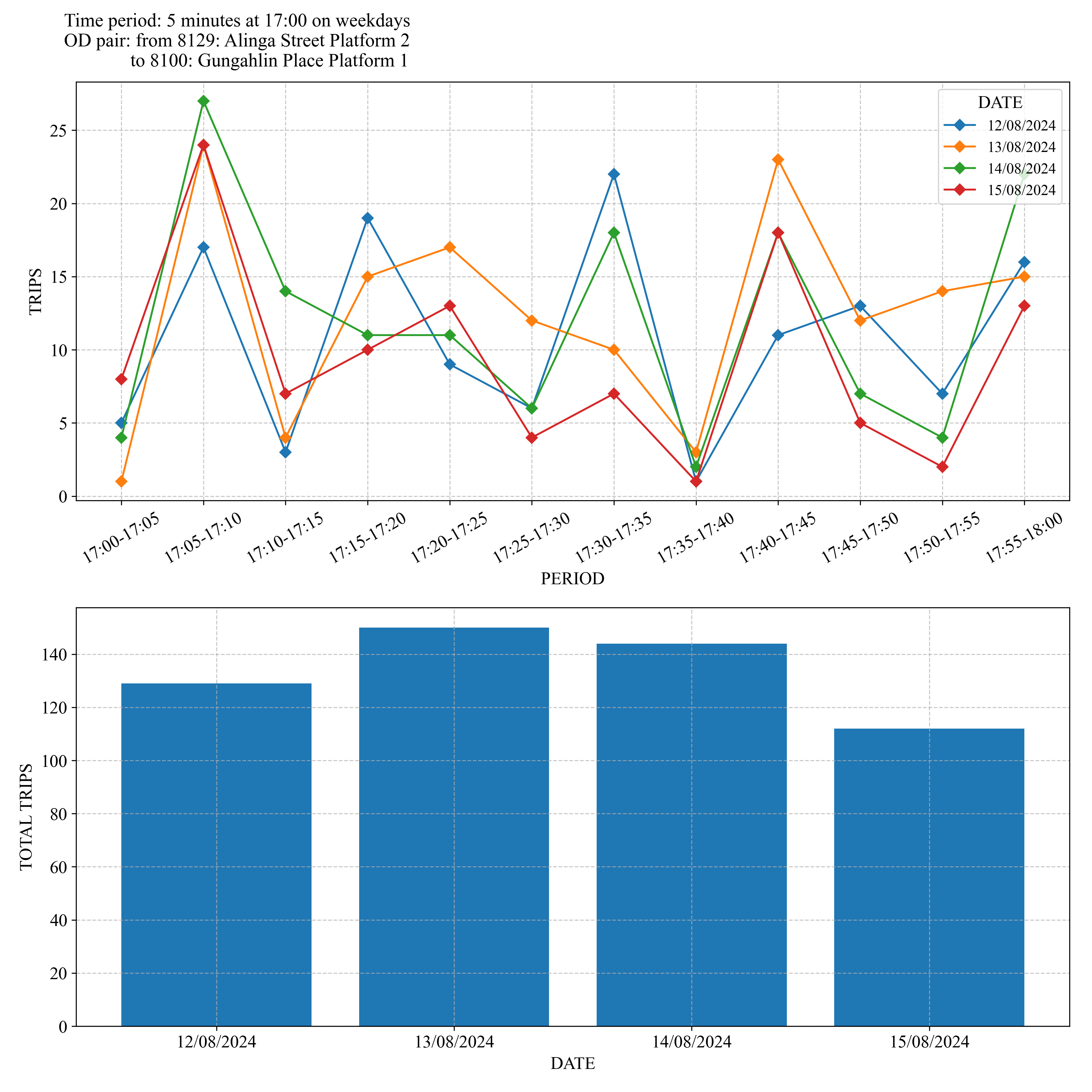}
    \caption{OD trips from Alinga Street Platform 2	to Gungahlin Place Platform 1 in Canberra, Australia}
    \label{fig:Canberra}
\end{figure}

The lack of attention to ridership randomness might lead to incorrect design decisions. Particularly, if buses are designed for the average ridership, they might get overcrowded when a random surge occurs, potentially leading to some users being unable to board the vehicle, a situation that occurs frequently in many urban public transport systems \parencite{drabicki_mitigating_2023}. Therefore, it is critical to expand previous strategic models to account for this stochasticity.

Not only is this topic relevant, but it is particularly timely in terms of how to address it: the ability of online coordination between users and vehicles can be a potent tool to reduce the impacts of randomness. Specifically, if the public transport system gathers online information about the origin and destination of the passengers, it can re-route some vehicles from lines that have fewer passengers than expected to others facing random demand.

\subsection{Related works}
\subsubsection{Online coordination in public transport}
Online coordination between demand and supply has been studied to introduce novel mobility into public transport, including demand-responsive bus systems, and related modes \parencite{shaheen_shared_2016, narayanan_shared_2020}.
% such as \parencite{vansteenwegen_survey_2022} carpooling and ridepooling \parencite{siuhi_opportunities_2016}. Mobility design develops into two types: stop-based service \parencite{vallee_maximizing_2017, t_perera_simulation_2018, melis_real-time_2022}, and door-to-door service \parencite{alonso-mora_-demand_2017, navidi_comparison_2018, song_optimized_2024, wang_demand-responsive_2024}. 
Both passengers and operators benefit from advanced information by online coordination \parencite{melis_real-time_2022}, and the effectiveness of such coordination has been demonstrated for demand-responsive buses to outperform traditional buses in rural areas \parencite{coutinho_impacts_2020} and areas with high peak/off-peak demand variability \parencite{fielbaum_improving_2024, perera_resurgence_2020}.

%Recognizing these advantages, on-demand mobility can be integrated into traditional public transport. For an extensive review of the integration of on-demand services into public bus systems, see \textcite{vansteenwegen_survey_2022}. 

When on-demand services are analyzed within a public transport network, a well-studied alternative is to use them as feeders for the first- and last-mile of public transport \parencite{badia_feeder_2020, melis_integrated_2024, aravind_equity_2024}. \textcite{badia_feeder_2020} determines whether fixed-route service or door-to-door service is more competitive as first- and last-mile solution. They emphasize that unless the door-to-door feeder service significantly reduces the operators' costs, it could not outperform the fixed-route service.
\textcite{melis_integrated_2024} incorporated on-demand buses into a high-frequency metro network by formulating an on-demand bus routing problem. They found that the integration system outperforms during peak hours, while only on-demand buses operate during off-peak hours.
\textcite{aravind_equity_2024} highlight that operating on-demand service as a feeder for the fixed-route public transport improves the accessibility significantly, supported by the case study of Morristown, Tennessee, US. For a thorough review, the reader is referred to \textcite{vansteenwegen_survey_2022}.

%Integrating on-demand buses into public transport can provide a more effective solution to the first- and last-mile problem, when demand is consistent at peak hour and the integration reduces costs. Such integration can also enhance the cost efficiency and accessibility of public transport.

Beyond the first- and last-mile problem, the integration of on-demand services into public transport has also been analyzed at the citywide scale. At the bus planning stage, \textcite{kumar_planning_2022} determined which existing bus lines should be removed, designed the optimal frequencies for the remaining lines, and optimized the fleet size of mobility-on-demand across various zones. Based on experiments conducted on the Sioux Falls network, they found that congestion could be reduced under such a hybrid design.  

The integration of on-demand mobility with traditional public transport is promising for both first- and last-mile services and citywide applications. However, few studies have examined analytically the joint optimization of fleet size and vehicle capacity, which are essential components of transport planning. Moreover, all of these papers require a specific coordination between passengers and vehicles, as opposed to letting the passengers take the first available vehicle, whose simplicity provides advantages for users and operators \parencite{fielbaum_coordination_2025}.

To the best of our knowledge, this is the first study where online coordination is analyzed at the planning level and within a given set of lines in order to address the inherent randomness in bus demand.

\subsubsection{Continuous approximation and single-line model}

Research on bus systems can be done at a strategic, tactical, or operational level. We follow a strategic approach, which enables us to study critical questions such as fleet sizing and frequencies. More specifically, we adopt a \textit{continuous approximation approach},
which employs continuous functions and assumptions to estimate system costs. This high-level method, widely applied in public transport design, offers analytical tractability and captures insightful general trends in public transport \parencite{daganzo_increasing_1987, calabro_adaptive_2023}.

\textcite{ansari_advancements_2018} provide a comprehensive review of the application of continuous approximation in logistics and transportation systems. Network design has been one of the applications of continuous approximation models. For example, the method has been applied to bus service design \parencite{calabro_adaptive_2023}, bus routing \parencite{ellegood_continuous_2015}, and bus spacing \parencite{pandey_bus_2024}. 

Several papers have employed such an approach to design public transport including the integration of on-demand mobility with conventional services. \textcite{calabro_adaptive_2023} adopted continuous approximation to design an adaptive transit system, which deploys fixed-routes and demand-responsive feeders. They found that such integration significantly improves users costs and reduce access time to main truck services. \textcite{silva_-demand_2022} proposed a family of continuous approximation models to predict system performance, which provides information for fleet sizing and driving scheduling of on-demand services.

A specific type of continuous approximation is the so-called \textit{single-line model}. It focuses on the design of a single bus line rather than the overall network structure. This model analyzes one line in isolation, allowing researchers to gain transparent insights, derive closed-form expressions, and obtain analytical conclusions. Early analytical models developed in the 1970s established the fundamental relationships between demand, frequency, and bus capacity \parencite{mohring_optimization_1972, jansson_simple_1980}. In a single-corridor setting, \textcite{mohring_optimization_1972} found increasing returns to scale in public bus services, as higher demand leads to higher frequency and shorter waiting times. The optimal frequency was found to be proportional to the square root of demand. Extending this analysis, \textcite{jansson_simple_1980} jointly optimized frequency and bus capacity, showing that while frequency remains proportional to the square root of demand, bus capacity is approximately—but not exactly—proportional to it within a given demand range.

Subsequent research has largely built upon these principles to address more complex operational and behavioral settings. Recently, the single-line model has been applied to study optimal frequency and bus size \parencite{jara-diaz_optimal_2017}, bus stop spacing \parencite{zhang_optimization_2020,fielbaum_relationship_2024}, vehicle scheduling \parencite{c_tang_vehicle_2019}, and pricing \parencite{ramos_fare_2023}. Acknowledging its established role in the literature, we employ a single-line model based on continuous approximation to optimize the fleet size and bus capacity for an integrated system of flexible and conventional buses.

\subsection{Contribution}
In this study, we apply a continuous approximation approach to design public transport services accounting for random ridership. We start by expanding the traditional single-line model that determines the optimal frequency and bus capacity for a given stochastic demand level. The introduction of random ridership requires introducing the possibility of \textit{rejections} in our models, i.e., users who cannot board the bus due to overcrowding. We optimize the system's frequency and provide analytical results on the impact of randomness. 

To solve the challenges posed by the random nature of demand, we propose a \textit{hybrid} system, where some buses run as usual within a line, and others might get rerouted between different lines. This is studied through an expansion of the single-line model where several lines run in parallel, inspired by the models of \textcite{chang_multiple_1991,fielbaum_lines_2021}.

The major contributions of this paper are as follows:
\begin{enumerate}
    \item We extend the single-line model by including stochastic ridership and consequentially rejections due to overcrowding. We present analytical results regarding the changes in the optimal design variables as the demand level becomes more random. 
    \item Through our theoretical hybrid model and analytical results, we show that route flexibility and online coordination enable effective management of ridership randomness.
\end{enumerate}

The paper is organized as follows. Section 2 introduces the extended single-line model under stochastic ridership, along with its theoretical analysis. Section 3 develops and analyzes a hybrid bus system that combines conventional and flexible buses. Section 4 presents numerical experiments on both the extended single-line model and the hybrid model to validate the theoretical findings. Section 5 discusses the role of assumptions and their impact on the results. Section 6 concludes the paper by summarizing the main results and discussing directions for future research.

\section{Single-line model}
In this section, we extend the traditional single-line model by considering stochastic ridership. Subsequently, we analyze the behavior of the optimal design under varying levels of ridership randomness. 

\subsection{Model and costs}
Let us start with a straight bus line with only two stops, which is adapted from classic single-line models \parencite{mohring_optimization_1972, jansson_simple_1980, kraus_discomfort_1991}. This two-bus-stop simplification concentrates the model on the demand instead of other line structure factors that may affect the optimal design. We design the frequency $f$ and the bus capacity $K$ to serve a ridership $X$ between the two stops. Passengers board at one bus stop and alight at the other in one direction with no stops in between. We denote the cycle time by $T$.

Let us first recall the traditional single-line model that assumes $X$ to be fixed and aims to minimize the sum of users' and operators' costs. Users' costs are calculated by multiplying ridership $X$ by waiting costs $\frac{\pi_w}{2f}$ plus in-vehicle costs $\frac{\pi_v T}{2}$, where $\pi_w$ and $\pi_v$ are the values of waiting time and in-vehicle time for passengers. Operators' costs follow the formula by \textcite{jansson_simple_1980}. The fleet size is given by $s = f T$, and $s$ is multiplied by the cost of acquiring and operating each bus $\gamma_0 + \gamma_1 K$, where $\gamma_0$ and $\gamma_1$ represent the capacity-independent and capacity-dependent cost components, respectively. The optimization model includes non-negativity and capacity constraints and leads to closed-form expressions for the optimal frequency $f^*$ and bus capacity $K^*$ \parencite{jara-diaz_towards_2003}. 

\begin{align}
    &\text{min} \quad (\pi_w\frac{1}{2f}+\pi_v\frac{T}{2})X + fT(\gamma_0 + \gamma_1 K)\\ 
    &\text{subject to} \nonumber\\
    & \quad f,K\geq 0\\
    & \quad fK \geq X
\end{align}

We extend the traditional single-line model by incorporating stochastic ridership, which considers rejections. The ridership is no longer deterministic but follows a distribution with a cumulative distribution function $\mathbf{F}(x) = P(X\leq x)$ and a probability density function $\mathbf{f}(x)$. If the demand happens to be too large and does not fit in the buses, then some passengers are rejected: This can be interpreted as traditional denied boarding due to overcrowding, or through some sort of online coordination prior to going to the bus stop. Note that we make the following assumption \assumpA{ASP:rej_pax}: the rejected passengers leave the system, as opposed to waiting for the next availabe vehicle. The impact of this assumption is further discussed in Section~\ref{subsec:AspDis}. All notations in the extended single-line model are as in Table~\ref{tab:Single-line}.
\begin{table}[htbp]
    \centering
    \begin{tabularx}{\textwidth}{ccX}
    \hline
    Type & Note & Description \\
    \hline
    Inputs&$f$ & The frequency of buses (veh/h)\\
    &$K$ & The capacity of buses (seats)\\
    \hline
    Ridership & $X$ & The ridership, which is deterministic in the traditional single-line model and stochastic in the extended single-line model (pax/h)\\
    &$E(X_B)$& The mean of served bus ridership (pax/h)\\
    &$E(X_R)$& The mean of rejected ridership (pax/h)\\
    \hline
    &$\gamma_0$ & Capacity-independent term of operators' costs  (US\$)\\
    &$\gamma_1$ &  Capacity-dependent term of the operators' costs  (US\$/seat)\\
    Exogenous&$\pi_w$ &  The value of waiting time of passengers  (US\$/h)\\
    Constants&$\pi_v$ &  The value of in-vehicle time of passengers (US\$/h)\\
    &$T$& The cycle time of buses. (h)\\
    &$\Pi_R$& The penalty for rejection. (US\$/pax)\\
    \hline
    &$\mathbf{F}$& The cumulative distribution function of the ridership \\ 
    Probability&$\mathbf{f}$& The probability density function of the ridership \\ 
    &$P_R$& The probability of rejection \\ 
    \hline
    Capacity &$Q$& The systematic capacity for the single-line model\\ 
    \hline
    \end{tabularx}
\caption{Notations in single-line models}
\label{tab:Single-line}
\end{table}

We now explain the details of the single-line model with stochastic ridership. For now, we take $f$ and $K$ as given to write all the components of the cost function, so we can then optimize their values. The bus system rejects passengers if the ridership exceeds the capacity of the bus system, and the rejection probability is
\begin{equation}
    P_R = 1 - \mathbf{F}(Q)
\end{equation}
where $P_R$ is the probability of rejections and $Q$ is the system capacity, which is the frequency times the capacity, i.e. 
\begin{equation}
    Q = fK
\end{equation}

Costs in the extended single-line model comprise expected users', operators', and rejection costs. For each served passenger, waiting and in-vehicle costs are the same as in the traditional model. So are operators' costs. The key differences are the ridership and rejection costs. The ridership consists of served and rejected passengers. The mean of the served ridership is 
\begin{align}\label{Eq:ExpectedServedRidershipSingleLine}
    E(X_B)= \int_0^Q x \mathbf{f}(x)dx + Q P_R
\end{align}
where the first term in Eq. \eqref{Eq:ExpectedServedRidershipSingleLine} is the expected served ridership when it is fully supplied, and the second term is the served ridership when buses are overloaded. The expected value of the rejected ridership is 
\begin{align}
    E(X_R) = \int_Q^{\infty} (x-Q) \mathbf{f}(x)dx
\end{align}

The rejection costs are the multiple of the rejection penalty $\Pi_R$ and the rejected ridership. We design the frequency $f$ and bus sizes $K$ using an optimization model to minimize expected total costs $G(f, K)$ subject to non-negativity constraints.

\begin{align}
    &\text{min} \quad G(f, K)= (\pi_w\frac{1}{2f}+\pi_v\frac{T}{2})E(X_B) + fT(\gamma_0 + \gamma_1 K)+\Pi_R E(X_R) \label{eq:single_obj}\\ 
    &\text{subject to} \nonumber\\
    &\quad f,K \geq 0
\end{align}

Note that there is no capacity constraint here as rejections are admissible. Given the random nature of the demand, a strict capacity constraint might render the problem unfeasible, or force the system to provide capacity for the worst-case scenario, which can be unwise if the maximum feasible demand level is too large.

\subsection{Theoretical analysis}
\label{sec:theo_single}
In this subsection, we analyze the results of the first-order conditions to determine the impacts of ridership randomness on the system's design. To do this, let us formalize $\sigma$, the standard deviation of the ridership, as the crucial parameter to analyze. More specifically, we assume that the ridership has a constant mean and its distribution can be parametrized by $\sigma$, with $\mathbf{F}_{\sigma}$ denoting the cumulative distribution function for a given standard deviation (for example, the ridership might follow a log-normal distribution whose mean is given). We denote by $f^*(\sigma), K^*(\sigma)$ the resulting optimal frequency and capacity given $\sigma$. 

In order to proceed with the formal analysis, we need to consider the following assumption.

\textbf{Assumption} \assumpA{ASP:pax_dis}: At the right tail of the distribution, the cumulative distribution function decreases with the rise in the standard deviation of ridership. Mathematically, if $\sigma_1 > \sigma_2$, then for all $x$ sufficiently large, $\mathbf{F}_{\sigma_1}(x) < \mathbf{F}_{\sigma_2}(x)$.

This assumption might look technical but is quite natural. It follows from the fact that an increase in the standard deviation spreads the distribution more widely, leading to a heavier right tail where density values become relatively larger for higher values of $x$. This assumption holds for various distributions, including the normal, gamma, and beta distributions, as illustrated in Figure~\ref{fig:Distributions}.

Assumption~\assumpref{ASP:pax_dis} is used to analyze the optimal solutions under the first-order conditions and to determine the impact of the standard deviation of the ridership on the optimal frequency and capacity. Based on Assumption~\assumpref{ASP:pax_dis} and the derived optimal solutions, we establish the main result of this section, which is formally stated in Theorem \ref{Thm:Capacity}.

\begin{figure}[htbp]
\includegraphics[width=\linewidth]{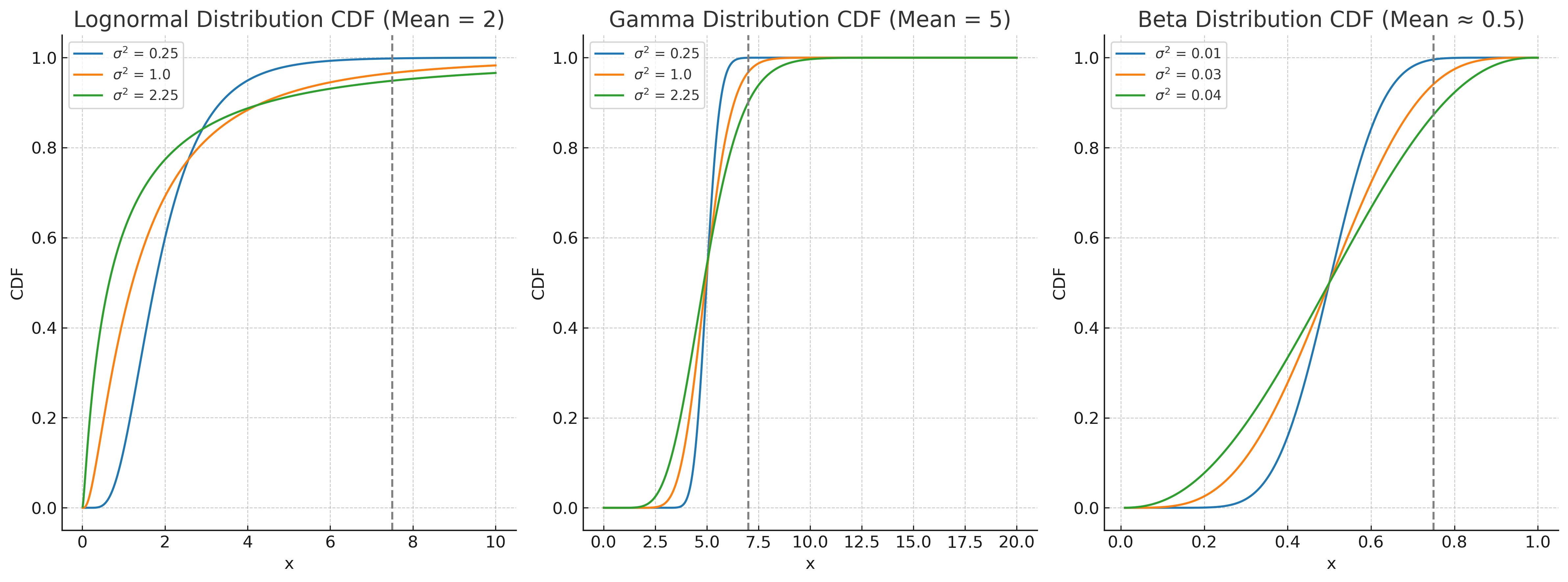}
\caption{The cumulative distribution function  of Lognormal, Gamma, and Beta distribution under several standard deviations}
\label{fig:Distributions}
\end{figure}

\begin{theorem}\label{Thm:Capacity}
    Assume a fixed expected ridership and a large rejection penalty. Then, the optimal bus capacity ($K^*$) increases, while the optimal frequency ($f^*$) remains nearly constant, as the ridership standard deviation ($\sigma$) increases. Formally, $\frac{df^*}{d\sigma} \approx 0, \frac{dK^*}{d\sigma} > 0$.
\end{theorem}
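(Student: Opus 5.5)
The plan is to reparametrize the design by $(f,Q)$ with $Q=fK$, so that $K=Q/f$. The objective \eqref{eq:single_obj} then becomes
\begin{align*}
G(f,Q) = \Bigl(\frac{\pi_w}{2f}+\frac{\pi_v T}{2}\Bigr)E(X_B) + \gamma_0 fT + \gamma_1 T Q + \Pi_R E(X_R).
\end{align*}
Two elementary identities are used throughout:
\begin{align*}
\frac{\partial E(X_B)}{\partial Q} = 1-\mathbf{F}_\sigma(Q) = P_R, \qquad
\frac{\partial E(X_R)}{\partial Q} = -(1-\mathbf{F}_\sigma(Q)).
\end{align*}
Both follow by differentiating under the integral sign, and the boundary terms cancel.

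The first-order conditions are
\begin{align*}
\frac{\partial G}{\partial f} &= -\frac{\pi_w}{2f^2}E(X_B) + \gamma_0 T = 0,\\
\frac{\partial G}{\partial Q} &= \Bigl(\frac{\pi_w}{2f}+\frac{\pi_v T}{2}-\Pi_R\Bigr)\bigl(1-\mathbf{F}_\sigma(Q)\bigr) + \gamma_1 T = 0.
\end{align*}
The first gives $f^*=\sqrt{\pi_w E(X_B)/(2\gamma_0 T)}$. The second gives
\begin{align*}
1-\mathbf{F}_\sigma(Q^*) = \frac{\gamma_1 T}{\Pi_R - \pi_w/(2f^*) - \pi_v T/2}.
\end{align*}
The right-hand side is positive and small when $\Pi_R$ is large. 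So $P_R$ at the optimum is a small constant, up to the weak dependence through $f^*$.

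Frequency step. With a large penalty, $P_R$ is small and $Q^*$ lies in the right tail. Hence
\begin{align*}
E(X_B) = E(X) - E(X_R),
\end{align*}
where $E(X)$ is fixed by assumption and $E(X_R)=\int_{Q^*}^\infty(1-\mathbf{F}_\sigma(x))\,dx$ is a tail quantity that is negligible relative to $E(X)$. Therefore $f^*\approx\sqrt{\pi_w E(X)/(2\gamma_0 T)}$, independent of $\sigma$, which gives $df^*/d\sigma\approx 0$. I would make this precise by differentiating the first condition implicitly. Doing so shows that $df^*/d\sigma$ is proportional to $d E(X_R)/d\sigma$, and this derivative vanishes as $\Pi_R\to\infty$ because $Q^*\to\infty$.

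Capacity step. With $f^*$ treated as fixed, the second condition states that $\mathbf{F}_\sigma(Q^*(\sigma))=c$ for a constant $c$ close to $1$. Take $\sigma_1>\sigma_2$. By Assumption \assumpref{ASP:pax_dis}, for $Q$ large enough (and $c$ close to $1$ makes $Q^*$ large), we have $\mathbf{F}_{\sigma_1}(Q^*(\sigma_2)) < \mathbf{F}_{\sigma_2}(Q^*(\sigma_2)) = c$. Since $\mathbf{F}_{\sigma_1}$ is nondecreasing, this forces $Q^*(\sigma_1)>Q^*(\sigma_2)$. The differential version is
\begin{align*}
\frac{dQ^*}{d\sigma} = -\frac{\partial_\sigma \mathbf{F}_\sigma(Q^*)}{\mathbf{f}_\sigma(Q^*)} > 0.
\end{align*}
Then $K^*=Q^*/f^*$ with $f^*$ essentially constant, so
\begin{align*}
\frac{dK^*}{d\sigma} = \frac{1}{f^*}\frac{dQ^*}{d\sigma} - \frac{Q^*}{f^{*2}}\frac{df^*}{d\sigma} > 0.
\end{align*}
The second term is negligible by the frequency step.

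Main obstacle. The hard part is controlling the coupling: $f^*$ enters the right-hand side of the $Q$-condition, and $Q^*$ enters $E(X_B)$. I would handle this by implicit differentiation of the $2\times2$ system. This requires checking that the Hessian is positive definite near the optimum, which holds since $\mathbf{f}_\sigma(Q^*)>0$ and $E(X_B)>0$. The next step is to show that the cross-terms are $O(P_R)$ or $O(E(X_R))$, both of which vanish as $\Pi_R$ grows. This is exactly where the hypothesis of a large rejection penalty, and the approximate sign in $df^*/d\sigma\approx 0$, are used.
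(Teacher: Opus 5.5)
Your argument is correct at the paper's level of rigor. It shares the paper's skeleton: first-order conditions, comparative statics in $\sigma$, the large-penalty limit, and Assumption~\assumpref{ASP:pax_dis} for the sign. The difference is that you work in the coordinates $(f,Q)$ rather than $(f,K)$, which reorganizes the work.

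The paper does derive your two conditions, as \eqref{eq:first_order_base1}--\eqref{eq:first_order_base2}. It then differentiates the original $(f,K)$ system, computes the full Hessian and the $\sigma$-cross derivatives, and solves the resulting $2\times 2$ linear system. This gives the exact expressions \eqref{eq:single_opt_freq_sigma}--\eqref{eq:single_opt_cap_sigma} before any terms are discarded for large $\Pi_R$, so every term is on display, including the shared denominator and the common bracket that has to vanish.

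In your coordinates the system is nearly decoupled, which buys you three things:
\begin{itemize}
\item the exact identity $4T\gamma_0 f^*\,df^*/d\sigma=-\pi_w\,dE(X_R)/d\sigma$ (from $E(X_B)+E(X_R)=E(X)$), so the frequency moves only through expected rejections;
\item a reading of $Q^*$ as the tail quantile at level about $1-\gamma_1T/\Pi_R$, whose monotonicity in $\sigma$ (with $f^*$ frozen) follows from Assumption~\assumpref{ASP:pax_dis} by a finite comparison, without differentiating $\mathbf{F}_\sigma$ in $\sigma$;
\item the paper's final formula $dK^*/d\sigma\approx-\partial_\sigma\mathbf{F}/(f\,\mathbf{f}(Q))$, with the mean-to-frequency, tail-to-capacity mechanism in plain view.
\end{itemize}

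Two of your one-line justifications need the large-penalty hypothesis explicitly.
\begin{itemize}
\item Positive definiteness of the $(f,Q)$ Hessian does not follow from $\mathbf{f}_\sigma(Q^*)>0$ and $E(X_B)>0$ alone, because of the off-diagonal entry $-\pi_w P_R/(2f^2)$. The determinant equals $f^{-2}$ times the paper's common denominator, and its positivity is exactly what the paper asserts for a significant penalty.
\item Since $Q^*$ grows with $\Pi_R$, $df^*/d\sigma\approx 0$ alone does not make $(Q^*/f^{*2})\,df^*/d\sigma$ negligible. However, your identity shows that the correction $-(Q^*/f^{*2})\,df^*/d\sigma$ has the sign of $dE(X_R)/d\sigma$. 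So whenever expected rejections grow with $\sigma$, consistent with the slight downward drift of $f^*$ in the paper's numerics, this term only reinforces $dK^*/d\sigma>0$ and no rate estimate is needed.
\end{itemize}
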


Theorem \ref{Thm:Capacity} describes that capacity, rather than frequency, accommodates ridership variability. The proof of Theorem \ref{Thm:Capacity} is provided in the appendix~\ref{apdx:comp_thm_cap}, but let us now discuss its core. The optimal frequency is shown to depend linearly on the square root of the served ridership $\sqrt{E(X_B)}$, although the served ridership itself is a function of frequency and bus capacity. This relationship explains why the optimal frequency remains nearly constant under a significant rejection penalty, as the system is designed to serve nearly the entire ridership ($E(X_B) \approx E(X)$).

From an intuitive perspective, ridership variability implies a chance of having a greater ridership than the mean, which should lead to an increased total capacity of $fK$ given the large rejection penalty. Why should this be faced through $K$ rather than $f$? According to Mohring's classical model and its further variations, frequency represents a trade-off between users and operators \parencite{mohring_optimization_1972,jara-diaz_towards_2003}. Increasing frequency reduces users' waiting costs but raises operators' costs as a larger fleet is needed. As the expected ridership is constant, the way to face this trade-off is unaffected by the stochasticity. With a large rejection penalty and stochastic ridership, operators should provide extra capacity to avoid high rejection costs. So, in order to prevent rejecting users, the total capacity is adapted in the cheapest possible way, which is by increasing the vehicles' sizes, as illustrated in Figure~\ref{fig:scheme_extended_single}.

\begin{figure}[htbp]
\centering
\begin{tikzpicture}[
  node distance=1.5cm and 2cm,
  box/.style={rectangle, draw=blue!60, fill=blue!30, thick, minimum width=3.5cm, minimum height=1cm, align=center},
  arrow/.style={-{Latex[width=2mm]}, thick}
]

% Nodes
\node[box] (ridership) {Stochastic\\Ridership};
\node[box, below left=of ridership] (mean) {Mean\\($\mu$)};
\node[box, below right=of ridership] (std) {Standard Deviation\\($\sigma$)};
\node[box, below=of mean] (freq) {Frequency\\($f$)\\\footnotesize Handles average demand};
\node[box, below=of std] (capacity) {Bus Capacity\\($K$)\\\footnotesize Buffers demand fluctuations};

% Arrows
\draw[arrow] (ridership) -- (mean);
\draw[arrow] (ridership) -- (std);
\draw[arrow] (mean) -- (freq);
\draw[arrow] (std) -- (capacity);

\end{tikzpicture}
\caption{Relationship between stochastic ridership and the determination of frequency and bus capacity.}
\label{fig:scheme_extended_single}
\end{figure}

\section{Hybrid bus system} 
In the previous section, we extended the traditional single-line model to account for the randomness of the ridership. To accommodate stochastic ridership, flexible buses (FBs) that do not follow fixed lines can be an effective tool \parencite{ fielbaum_design_2024, wang_terminal-based_2024}. In this section, we develop a specific integration between FBs and conventional buses (CBs) to improve the system’s adaptability to stochastic ridership, leveraging their complementary strengths. Following prior papers applying continuous approximation \parencite{sipetas_continuous_2021, calabro_adaptive_2023}, we analyze the aggregation rather than passenger-by-passenger.

\subsection{Model}

We introduce a hybrid model that combines CBs and FBs within a designated city region, where passengers are taking feeder bus lines to board a train line; Note that the train line is not an explicit part of our model. We build a parallel bus line model based on the multi-line models proposed by \textcite{fielbaum_lines_2021, chang_multiple_1991}, in which bus lines are evenly spaced with identical horizontal intervals. This spatial model has the virtue of being simple enough to admit an analytical treatment, and at the same time it allows for the dynamics of the FBs. 

As illustrated in Figure~\ref{fig:Model}, CBs operate along $N$ fixed bus lines, following a vertical and parallel travel pattern. In contrast, FBs also travel vertically but follow flexible routes when rebalancing between bus stops, allowing them to adjust to demand fluctuations dynamically. Passengers are generated at origin stops (blue in Figure~\ref{fig:Model}), travel to their destination stops (orange in Figure~\ref{fig:Model}), and then leave the system. We assume no demand from destination stops to origin stops \assumpA{ASP:one_dir}. CBs and FBs collect passengers at their origins and transport them vertically to their destinations without stops. Each bus line is assumed to have two bus stops \assumpA{ASP:two_stop}. Upon finishing their vertical travels, FBs might be rebalanced, i.e., they might start their next trip from a different blue node. Note that the demand still follows Assumptions \Assumpref{ASP:rej_pax} and \Assumpref{ASP:pax_dis}. We introduce these extra assumptions to simplify the model; however, the main results are not constrained by them. This is further discussed in Section~\ref{subsec:AspDis}.
\begin{figure}[htbp]
    \centering
    \includegraphics[width = \textwidth]{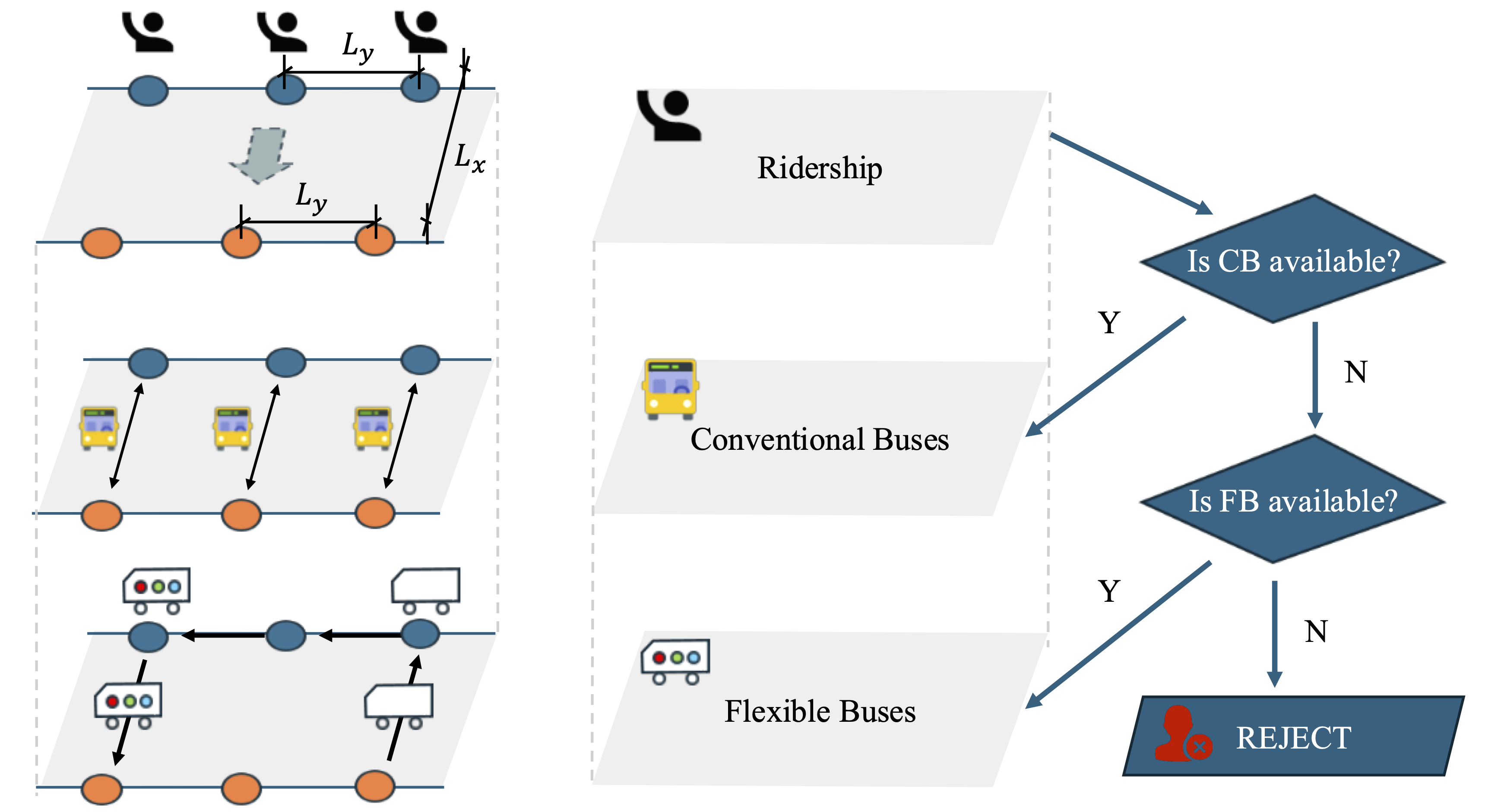}
    \caption{the network for CBs and FBs serving stochastic ridership (blue nodes: origin stops; orange nodes: destination stops) (Left) and the coordination between CBs and FBs (Right)}
    \label{fig:Model}
\end{figure}

Passengers inform their trips online, enabling the system to make efficient rebalancing decisions. The system first checks the availability of CBs. If no CBs are available, it then assesses the availability of FBs. If neither CBs nor FBs are available, passengers are rejected by online coordination. Instead of assuming deterministic ridership, our model incorporates stochastic ridership to account for its variance. Let $X$ denote the ridership of the whole system as a random variable following a probability distribution with cumulative distribution function $\mathbf{F}(x) = P(X\leq x)$. The ridership on each vertical line is assumed to be independently and identically distributed (i.i.d.). The coordination splits the ridership into CBs, FBs, and rejection ridership, denoted as $X_C$, $X_F$, and $X_R$; these three random variables fulfill $X_C+X_F+X_R=X$. 

We apply continuous approximation to estimate the costs of the hybrid bus system and formulate an optimization model to minimize the expected costs. The decision variables comprise frequency ($f_C$) and bus capacity ($K_C$) for CBs, as well as fleet size ($s_F$) and bus capacity ($K_F$) for FBs. All relevant notations are summarized in Table~\ref{Table:Notations_Hybrid}. In the subscript notation, $F$ represents FBs, $C$ represents CBs, $R$ represents rejections, $u$ represents users, and $o$ represents operators.
\begin{table}[htbp]
\caption{Notations of the hybrid bus system}
\label{Table:Notations_Hybrid}
\centering
\begin{tabularx}{\textwidth}{llX}
\hline
Type & Note & Description \\
\hline
\multirow{4}{*}{Design Variables} 
 & $f_C$ & Frequency of CBs (veh/h) \\
 & $K_C$ & Bus capacity of CBs (seats) \\
 & $s_F$ & Fleet size of FBs (veh) \\
 & $K_F$ & Bus capacity of FBs (seats) \\
\hline
\multirow{6}{*}{Ridership}
 & $X$ & Total transit ridership (CBs, FBs, and rejected) (pax/h) \\
 & $X_C$ & Ridership of CBs (pax/h) \\
 & $X_F$ & Ridership of FBs (pax/h) \\
 & $X_R$ & Rejected ridership (pax/h) \\
 & $E(X_C)$ & Expected ridership of CBs (pax/h) \\
 & $E(X_F)$ & Expected ridership of FBs (pax/h) \\
 & $E(X_R)$ & Expected rejected ridership (pax/h) \\
\hline
\multirow{7}{*}{Exogenous Constants}
 & $N$ & Number of bus lines \\
 & $L_x$ & Vertical distance of the region and trip length (km) \\
 & $L_y$ & Horizontal distance between bus stops (km) \\
 & $V$ & Speed of CBs and FBs (km/h) \\
 & $\gamma_0$ & Capacity-independent operational cost of CBs and FBs (US\$) \\
 & $\gamma_1$ & Capacity-dependent operational cost of CBs and FBs (US\$/seat) \\
 & $\pi_w$ & Value of waiting time (US\$/h) \\
 & $\pi_v$ & Value of in-vehicle time (US\$/h) \\
 & $\Pi_R$ & Penalty for rejection (US\$/pax) \\
\hline
\multirow{3}{*}{Probabilities}
 & $P_C$ & Probability of being underserved by CBs \\
 & $P_R$ & Probability of a passenger being rejected \\
 & $P_{j,i}^B$ & Probability of FBs rebalanced from origin $j$ to $i$ \\
\hline
\multirow{6}{*}{Costs}
 & $c_{F,w}$ & Waiting costs of FBs (US\$) \\
 & $c_{F,v}$ & In-vehicle costs of FBs (US\$) \\
 & $c_{F,u}$ & Users' costs of FBs (US\$) \\
 & $c_{F,o}$ & Operators' costs of FBs (US\$) \\
 & $c_{C,u}$ & Users' costs of CBs (US\$) \\
 & $c_{C,o}$ & Operators' costs of CBs (US\$) \\
\hline
\multirow{4}{*}{Detours \& Capacity}
 & $D$ & Detours of FBs (km) \\
 & $Q_{CB}$ & Capacity of CBs (pax/h) \\
 & $Q_{FB}$ & Capacity of FBs (pax/h) \\
 & $Q_{HY}$ & Capacity of the hybrid model (pax/h) \\
\hline
\end{tabularx}
\end{table}

\subsection{Conventional buses}
The CB model considers a fixed-route bus system, where CBs operate vertically without intermediate stops. We assume that all lines are identical. The total costs consist of two components: users' costs ($c_{C,u}$) and operators' costs ($c_{C,o}$). We first focus on users' costs.
\begin{align}
    & c_{C,u} = \Bigl(\frac{\pi_w }{2f_C} +  \frac{ \pi_v L_x }{V}   \Bigr) E(X_C)\label{ur_cost_CB_new}
\end{align}
where $f_C$ is the frequency of CBs. $\pi_w$ and $\pi_v$ represent the waiting and in-vehicle time values for CBs passengers, respectively. $L_x$ denotes the travel distance, and $V$ is the average speed of CBs. For each passenger, the waiting and in-vehicle costs remain the same as in the single-line model. However, the difference lies in the expected ridership for CBs, denoted as $E(X_C)$, which is further discussed in Section~\ref{Sec: ridership}.

The operators' costs function is analogous to the single-line case extended to a multi-line system, where it equals the product of fleet size and the cost of acquiring and using a bus. Given identical bus lines, the operators' costs are determined by the sum of the product for all identical bus lines as
\begin{align}
    & c_{C,o} = N(f_C\frac{2L_x}{V})(\gamma_0 + \gamma_1 K_C)\label{op_cost_CB_new}
\end{align}
where $N$ is the number of bus lines, and $K_C$ is the bus capacity of CBs. $\gamma_0$ and $\gamma_1$ represent the fixed and per-capacity cost components of CBs operations, respectively.

\subsection{Flexible buses}
In this section, we first introduce the operation of FBs and then formulate users', operators' and total costs of FBs.

\subsubsection{FBs operation}
FBs pick up passengers at an origin stop and then search for stops underserved by CBs. FBs travel toward the targeted underserved stop, picking up passengers along the way. After collecting passengers at the target stop, FBs travel directly to the destination stop without intermediate stops. FBs operate empty from destination stops to origin stops and then begin the next service cycle. As shown in Figure~\ref{fig:Rebalance}, FBs are rebalanced through the following sequential search process:
\begin{enumerate}
    \item \textbf{Checking the current bus stop:}  
    FBs first assess the current bus stop. 
    \begin{itemize}
        \item If CBs underserve this stop, FBs are assigned to it, and the search terminates. 
    \end{itemize}
     
    \item \textbf{Evaluating following bus stop pairs:}  
     If the nearest stop is adequately served, FBs proceed to evaluate the following bus stop pair under these conditions:  
    \begin{itemize}
        \item If both stops in the pair are underserved, they receive equal assignment probability, and FBs are allocated to one of them, terminating the search.
        \item If one stop is underserved while the other is not, FBs are allocated to the underserved stop, and the search terminates.
        \item If both stops are adequately served, the search continues to the next following pair, repeating the evaluation process until an underserved stop is found. 
    \end{itemize}
\end{enumerate}
Once the rebalancing search terminates, FBs proceed to the identified underserved origin along the \textit{detour} route and pick up passengers at bus stops before traveling to their vertically aligned destination stops. 
\begin{figure}[htbp]
    \centering
    \includegraphics[width = \textwidth]{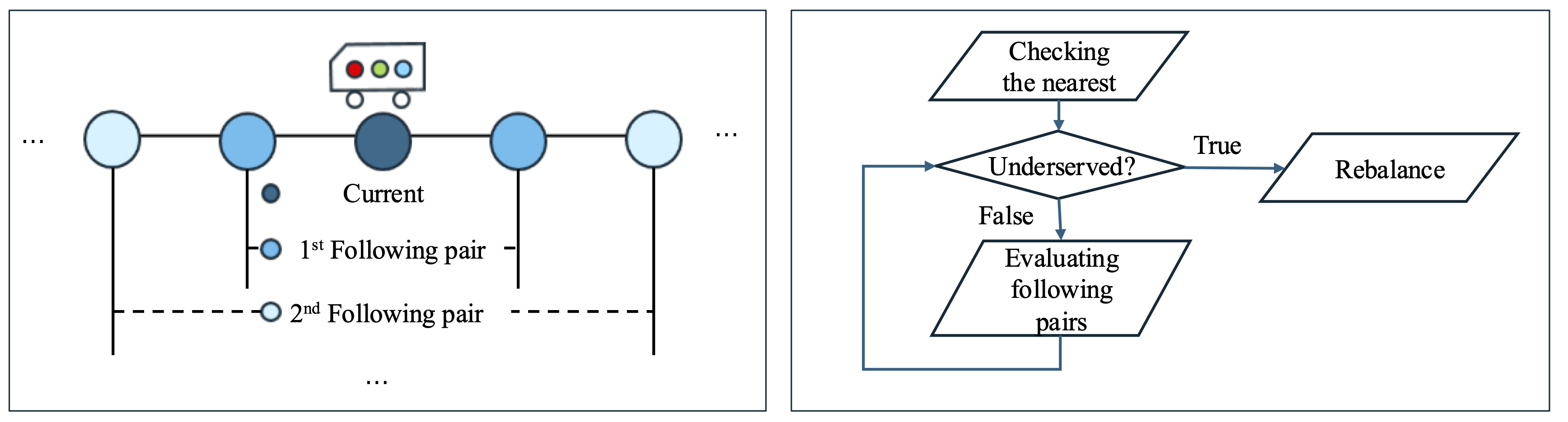}
    \caption{Diagram of bus stops in the rebalance (Left) and flow chart of rebalance search (Right)}
    \label{fig:Rebalance}
\end{figure}

The detour length is determined by the search process, which depends on the probability of being underserved. Since all lines are identical, bus stops have the same probability of being underserved by CBs. Bus stops are considered underserved by CBs when the ridership exceeds CBs capacity. The probability of bus stops being underserved by CBs, denoted as $P_C$, is 
\begin{align}
    P_C &= \mathbf{F}(\frac{X}{N}\geq K_C f_C)
\end{align}
The probability of FBs being rebalanced from origin $j$ to $i$ results from $P_C$. This relation is formally described in Eq.~\eqref{reb-1} - Eq.~\eqref{reb-2}. Eq.~\eqref{reb-1} represents that the current bus stop is underserved. Eq.~\eqref{reb-2} describes the probability that FBs are assigned to the origin of $k$ intervals away from the current origin. This implies the current origin and origins of $\{1 \cdots k-1\}$ intervals are adequately supplied by CBs, which can be described as $(1-P_C)(1-P_C)^{2(k-1)}$.
\begin{align}
    P_{i,i}^B &= P_C \label{reb-1}\\
    P_{i,i+k}^B &=  (1-P_C)^{2k-1}(P_C-\frac{P_C^2}{2})\label{reb-2}
\end{align}
where $P_{j, i}^B$ is the probability of FBs being rebalanced from origin $j$ to origin $i$. The distance between origins $j$ and $i$, denoted by $D_{j,i}$, is $|j - i|L_y$, where $L_y$ is the distance for the intervals between vertically aligned bus stops. The expected detour distance $D$ of FBs is 
\begin{align}
    D &= \sum_{k \in \mathbb{Z}} P_{j, k}^B D_{j, k}\label{eq:detour-tmp1}\\
    &= 2  \sum_{k=1}^{\infty}  P_{j,j+k}^B k L_y \label{eq:detour-tmp2}\\
    &= L_y \frac{1-P_C}{P_C (2-P_C)} \label{eq:detour}
\end{align}
Eq.~\eqref{eq:detour-tmp1} represents the expected detour distance from a bus stop $j$. Note that there is a necessary approximation here: since far-away detours are unlikely to take place (which is further discussed in \RemarkRef{Rmk:CB-FB}), we sum not only among the $N$ lines but to infinity, since the tail of this sum is expected to yield a low value. Note that this implies the following assumption \assumpA{ASP:Homo_line}: all parallel lines are exactly the same, which is why we obtain a single $D$ (independent of $j$)  in Eq. \eqref{eq:detour}. We discuss this assumption in Section \ref{subsec:AspDis}.

The symmetry of the region transforms Eq.~\eqref{eq:detour-tmp1} into Eq.~\eqref{eq:detour-tmp2}. The transition from Eq.~\eqref{eq:detour-tmp2} to Eq.~\eqref{eq:detour} follows from standard algebraic simplification. 

\remark{Remark 1:} \label{Rmk:CB-FB} According to Eq.~\eqref{eq:detour}, CBs influence the operation of FBs by affecting their detours, mathematically through $P_C$. When CBs provide insufficient capacity for public transit ridership, as $P_C$ approaches 1, FBs run with short detours to complement CBs in terms of capacity. Conversely, when CBs provide abundant capacity as $P_C$ is close to 0, FBs take long detours to pick up passengers underserved by CBs. Note that this is an inefficient scenario that will never be optimal. The costs of operating FBs exceed the limited benefit of serving  very few underserved passengers.
\footnote{Even when the detours are upper-bounded, if FBs cannot identify an underserved stop within the detour, they would continue to serve as CBs rather than wasting capacity on detours to serve the demand already covered by existing capacity. This ``meaningless-detours'' situation prevents the integration of FBs into the optimal design.}

\subsubsection{Cost of flexible buses}
The total costs of FBs consist of the users' and operators' costs. Let us begin with users' costs, which can be divided into waiting and in-vehicle costs. First, we establish the waiting costs $c_{F,w}$:
\begin{align}
    c_{F,w} &= \pi_w  \omega \frac{N(2L_x + D)}{s_F V}E(X_F) \label{eq:wait-cost-FBs} 
\end{align}
Waiting costs are the product of the waiting time, expected FBs ridership $E(X_F)$, and the time value $\pi_w$. Passengers wait at bus stops while FBs take detours to pick them up. Consequently, the expected waiting time is half the average headway between FBs\footnote{Note that FBs do not have a constant headway because they do not follow a fixed route. But as we are computing the average waiting time, we can use the average headway between vehicles, using the traditional formula that headway is equal to cycle time divided by fleet size.}. $E(X_F)$ denotes the expected FBs ridership. The estimated waiting time is $\omega \frac{N(2L_x + D)}{s_F V}$, where $\omega$ represents the waiting time fraction for the FBs \parencite{saidi_long-term_2016}, capturing the stochastic nature of FB arrivals. $\frac{2L_x + D}{V}$ is the cycle time of FBs, and $\frac{s_F}{N}$ is the average fleet size serving passengers at a bus stop. In-vehicle costs are determined by the in-vehicle time of FBs ridership multiplied by the time value $\pi_v$, given by
\begin{align}
    & c_{F,v} = \frac{\pi_v  (L_x + \frac{D}{2}) E(X_F)}{V}   \label{in-veh-cost-FBs}
\end{align}
where $\frac{L_x + \frac{D}{2}}{V}$ is the travel time, and $\frac{D}{2}$ is the average detour. $E(X_F)$ represents the expected ridership of FBs, which is defined in Section~\ref{Sec: ridership}.

The operators' costs of FBs $c_{F,o}$ are modeled the same as those of CBs. The total operators' cost is determined by the fleet size $s_F$ and the cost per bus, which has a linear dependency on the capacity $K_F$. $\gamma_0$ and $\gamma_1$ are the exogenous costs of the operation and acquisition of FBs. 
\begin{align}
    & c_{F,o} = s_F(\gamma_0 + \gamma_1 K_F) \label{eq:op-cost-FBs}
\end{align}

\subsection{Optimization model}
In this subsection, we introduce the rejection, the expected ridership, and the optimization model to design the frequency ($f_C$) and bus capacity ($K_C$) for CBs as well as the fleet size ($s_F$) and bus capacity ($K_F$) for FBs.
\subsubsection{Rejection and ridership} \label{Sec: ridership}
Passengers could be rejected, if the capacity of the hybrid model is insufficient to accommodate the ridership. The capacity of the hybrid system ($Q_{HY}$) is the sum of CBs ($Q_{CB}$) and FBs capacity ($Q_{FB}$), denoted as
\begin{align}
     Q_{HY} &= Q_{FB} + Q_{CB} \\
     &=K_F  \frac{s_F V}{\omega(2L_x + D)} (1 + \frac{D}{L_y}) + NK_Cf_C \label{eq:hybrid_cap}
\end{align}
The first term is the hourly capacity of FBs, which is the bus capacity of FBs ($K_F$) times the estimated number of FBs in service ($s_F V/\omega(2L_x + D)$) and the estimated number of bus stops visited by FBs ($1 + D/L_y$), where $2L_x + D$ is the cycle length. Although detours lead to longer cycle times, they enables FBs to pick up more passengers. The second term is the hourly CBs capacity. For $N$ bus lines, the hourly capacity is the bus capacity $K_C$ times $f_C$ the frequency. The probability of rejection $P_R$ refers to the condition where the ridership exceeds the hybrid capacity, given by
\begin{align}
    P_R = \mathbf{F}(X \geq Q_{HY})
\end{align}

In the hybrid model, the ridership consists of CBs, FBs, and rejected ridership. The expected CBs ridership is 
\begin{align}
    &E(X_C) = \int_0^{Q_{CB}} x \mathbf{f}(x) dx + P_C Q_{CB} \label{eq:hybrid_CB_ridership} 
\end{align}
where $\int_0^{Q_{CB}} x \mathbf{f}(x) dx$ is the CBs ridership given that CBs fully serve the ridership. But when CBs underserve the ridership, the served ridership of CBs is $Q_{CB}$, and the probability of underserving is $P_C$. Following the same idea, the expected FBs and rejected ridership are
\begin{align}
    &E(X_F) = \int_{Q_{CB}}^{Q_{HY}} (x - Q_{CB}) \mathbf{f}(x) dx + P_R Q_{FB} \label{eq:hybrid_FB_ridership} \\   
    & E(X_R) = \int_{Q_{HY}}^{\infty} (x -Q_{HY}) \mathbf{f}(x) dx \label{eq:hybrid_REJ_ridership}
\end{align}

\subsubsection{Optimization of hybrid model}

In the hybrid model, we optimize two sets of variables: (1) frequency $f_C$ and bus capacity $K_C$ for CBs, and (2) fleet size $s_F$ and bus capacity $K_F$ for FBs. The goal is to minimize the expected hybrid model costs consisting of users', operators' and rejection costs introduced above, formulated as Eq.~\eqref{eq:objective}. The optimization model includes non-negativity constraints for FBs and CBs.

\begin{align}
    &\text{min} \quad  \underbrace{\left( \pi_w \omega \frac{N(2L_x + D)}{s_F V} + \frac{\pi_v  (L_x + \frac{D}{2}) }{V} \right) E(X_F)} _{\text{Users' Costs of FBs}}
    + \underbrace{s_F(\gamma_0 + \gamma_1 K_F)}_{\text{Operators' Costs of FBs}}  \nonumber\\
    &\quad +\underbrace{ \Bigl(\frac{\pi_w}{2f_C} +   \frac{\pi_v L_x}{V} \Bigr)E(X_C)}_{\text{Users' Costs of CBs}}
    +\underbrace{N(2f_C\frac{L_x}{V} )(\gamma_0 + \gamma_1 K_C)}_{\text{Operators' Costs of CBs}}
    +\underbrace{E(X_R)\Pi_{REJ}}_{\text{Rejection Costs}}\label{eq:objective}\\ 
    &\text{subject to:} \nonumber\\
    &\quad f_C, K_C, s_F, K_F \geq 0 \label{con:non-neg}
\end{align}

\subsection{Theoretical Analysis }
In this section, we introduce Theorem~\ref{Thm:low-var-case} and Theorem~\ref{Thm:high-var-case}, which explain that the hybrid model exhibits a binary state. According to \RemarkRef{Rmk:CB-FB}, CBs influence FBs by detours. We distinguish two cases.
\begin{enumerate}
    \item  Low variance case: We prove that as the standard deviation of ridership approaches zero, the probability of bus stops being underserved by CBs tends to zero. This demonstrates that CBs are more efficient in low-variance scenarios where demand remains stable.
    \item  High variance case: We then prove that when the standard deviation of ridership is sufficiently large, the fleet sizes of CBs and FBs are positive. In high-variance settings, FBs are introduced into the system to leverage its online coordination and route flexibility advantages, adapting to fluctuating demand more efficiently than CBs.
\end{enumerate}

\begin{theorem} \label{Thm:low-var-case}
    As the standard deviation of ridership approaches zero ($\sigma \to 0$), the probability of being underserved by CBs tends to zero ($P_C \to 0$).
\end{theorem}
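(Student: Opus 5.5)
The plan is to study the optimal design as $\sigma \to 0$, with the mean $\mu = E(X)$ held fixed. In this limit the ridership converges in distribution, and in $L^2$, to the constant $\mu$. The argument proceeds by contradiction. Suppose that along some sequence $\sigma_n \to 0$ the optimal designs keep $P_C \ge \varepsilon > 0$. We will show that the hybrid cost of such designs is bounded away from the optimum.

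\textbf{Step 1 (a benchmark design).} Consider the pure-CB design with $s_F = 0$. Choose $f_C$ as the Mohring-type optimum for deterministic demand $\mu/N$ per line, and set $K_C$ so that $N K_C f_C = \mu + c\,\sigma^{1/2}$ for a fixed $c > 0$. By Chebyshev, $P_C \le \sigma/(c^2)$, which tends to $0$. By Cauchy--Schwarz, $E(X_R) \le \sqrt{E(X^2)\,P(X > Q_{CB})}$, which also tends to $0$. The benchmark cost therefore converges to the deterministic single-line optimum $G_0$, the optimum of the traditional model recalled in Section 2. Hence $\limsup$ of the optimal cost is at most $G_0$.

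\textbf{Step 2 (lower bound when $P_C \ge \varepsilon$).} When $\sigma$ is small, $P_C = P(X/N \ge K_C f_C) \ge \varepsilon$ forces $Q_{CB} \le \mu + o(1)$, because the distribution concentrates. There are then two sub-cases.

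In the first sub-case, $Q_{CB}$ is substantially below $\mu$. Then either a fixed fraction of the ridership is rejected, incurring cost of order $\Pi_R$, or FBs must carry a non-vanishing expected ridership $E(X_F)$. Eq.~\eqref{eq:detour} gives a detour $D$ that is bounded once $P_C \ge \varepsilon$. Per passenger, FBs are no cheaper than CBs: their waiting term uses a fraction $\omega$ of the cycle, which includes the extra $D$, and their in-vehicle time $L_x + D/2$ exceeds $L_x$. Serving demand by FBs therefore yields cost at least $G_0 + \delta(\varepsilon)$.

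In the second sub-case, $Q_{CB} \to \mu$ with $P_C \ge \varepsilon$. A deterministic-limit argument then shows that any positive $s_F$ adds operator cost $s_F(\gamma_0+\gamma_1K_F)$ while carrying only a vanishing $E(X_F)$, since $E(X-Q_{CB})^+ \to 0$. So $s_F \to 0$, and the design degenerates to pure CB. Here one must instead show that the optimum would not sit at $P_C \ge \varepsilon$. The first-order condition in $K_C$ gives $\Pi_R P_R \approx$ a marginal capacity cost of order $\gamma_1 \cdot 2L_x/V$ (net of the marginal user cost of served riders). With $s_F \approx 0$ we have $P_R \approx P_C$. Under the \emph{large rejection penalty} regime used for Theorem~\ref{Thm:Capacity}, this forces $P_C \approx \gamma_1 2L_x/(V\Pi_R)$, which is small.

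\textbf{Main obstacle.} I expect the main obstacle to be exactly this last point. Strictly, $P_C$ tends to a small constant set by the ratio of marginal capacity cost to $\Pi_R$, not to $0$, unless one also lets $\Pi_R$ be large or adopts the cost-comparison viewpoint. My first attempt would be to recast the FOC as a quantile condition: $Q_{CB} = F_\sigma^{-1}(1-\kappa)$ with $\kappa = O(1/\Pi_R)$. I would then argue, as in the proof of Theorem~\ref{Thm:Capacity}, that with $\Pi_R$ large the threshold is pushed into the right tail. Since the tail mass collapses as $\sigma \to 0$ and $\Pi_R$ is large, $P_C \to 0$, interpreting the statement in the same approximate sense as Theorem~\ref{Thm:Capacity}. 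Combining Steps 1 and 2 then gives the contradiction: an optimal design with $P_C \ge \varepsilon$ would cost strictly more than the benchmark.
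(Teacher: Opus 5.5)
Your Step~1 is sound. The contradiction in Step~2 cannot be closed, however, and it fails on exactly the branch that the paper's proof handles by a different device.

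In your first sub-case ($Q_{CB}$ well below $\mu$), concentration forces $P_C\to 1$, and Eq.~\eqref{eq:detour} then gives $D=L_y(1-P_C)/\bigl(P_C(2-P_C)\bigr)\to 0$. You only use that $D$ is bounded above. A strict gap $\delta(\varepsilon)$ needs an FB penalty bounded \emph{below}, and there is none. With $D=0$:
\begin{itemize}
\item an FB has the CB in-vehicle time $L_x/V$;
\item its waiting cost $\pi_w\omega N\,2L_x/(s_FV)$ is no larger than the CB value $\pi_w N L_x/(s_CV)$ for the same fleet whenever $\omega\le 1/2$ (the paper uses $\omega=0.5$);
\item by Eq.~\eqref{eq:hybrid_cap}, its hourly capacity $K_Fs_FV/(2\omega L_x)$ is $1/\omega$ times that of a CB fleet of equal size and seat count, at the same operator cost $s(\gamma_0+\gamma_1K)$.
\end{itemize}
So your claim that FBs are no cheaper than CBs per passenger is false in this model. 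A design with little CB capacity, where FBs carry the rest with vanishing detour, does not cost $G_0+\delta(\varepsilon)$; it can even cost less than $G_0$. This branch therefore survives your argument.

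The paper does not exclude this branch by comparing costs either. Its first case notes that $P_C\to1$ forces $D\to0$ and then identifies such zero-detour FBs with CBs (``FBs operate actually as CBs''). Its conclusion there is that the demand is carried by CB-like service. Your proposal has no counterpart to this identification, and that is the missing idea.

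Your second sub-case has the problem you flag, and your proposed repair is incorrect. With $s_F=0$, the derivative of the objective in $K_C$ is $Nf_C\bigl[(u-\Pi_R)P_C+2L_x\gamma_1/V\bigr]$ with $u=\pi_w/(2f_C)+\pi_vL_x/V$. For fixed $f_C$ the objective is convex in $K_C$. So for every $\sigma>0$ (continuous distribution, interior optimum), the optimum sits at the newsvendor fractile $P_C=\kappa:=(2L_x\gamma_1/V)/(\Pi_R-u)$. This depends on $\sigma$ only through $f_C$ and stays bounded away from zero.

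The tail mass beyond the chosen quantile is exactly $\kappa$ by construction. Letting $\sigma\to0$ moves that quantile toward $\mu$ but does not shrink the mass, so your step ``the tail mass collapses'' does not hold. Hence $P_C\to0$ would require $\Pi_R\to\infty$, which is not a hypothesis of the statement and would make $\sigma$ irrelevant. Moreover, for $\varepsilon<\kappa$ your contradiction hypothesis $P_C\ge\varepsilon$ is simply satisfied by the optimum in this regime.

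The paper never does a finite-$\sigma$ first-order analysis here. It passes directly to the step-function limit of $\mathbf{F}$ and takes $Q_{HY}\ge E(X)$ as a design requirement to get $P_R\to0$. It then splits into $P_C\to1$ (resolved by the identification above) and $P_C\to0$ (where $D\to\infty$ makes FBs prohibitively costly). To reach the paper's conclusion you would have to adopt that limit-level reading. The quantitative cost-comparison route you chose cannot yield $P_C\to0$ at fixed $\Pi_R$.
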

\begin{proof}
      When the variance converges to 0, the cumulative distribution function approaches a step function as 
      \begin{align}
          \mathbf{F}(X) = \begin{cases}
              0, X < E(X)\\
              1, X \geq E(X)\\
          \end{cases}
      \end{align} and the random variable for ridership tends to be a constant $E(X)$. As we design the system to serve the ridership, $Q_{HY} \geq E(X)$, then $\mathbf{F}(Q_{HY})$ is approaching 1, and $P_R$ converges to 0. We divide the structure of the hybrid model into two cases:
      \begin{enumerate}
          \item If CBs do not serve the expected ridership, it means that  $P_C \to 1$ (because there is almost no variance). This suggests that the demand is served solely by FBs. But thanks to Eq. \eqref{eq:detour}, we observe that in this case the detour approaches zero. That is to say, FBs operate actually as CBs in this context. In other words, CBs serve the ridership when the variance tends to zero.
          \item If CBs serve the expected ridership, as $P_C \to 0$, FBs tend to take infinite detours to find passengers who are underserved by CBs (Eq. \eqref{eq:detour}). Therefore, the costs to incorporate FBs into the hybrid bus system tend to infinity. Therefore, CBs serve the ridership when the variance tends to zero.
          \end{enumerate}

\end{proof}

Therefore, when the standard deviation of ridership approaches 0, there are no FBs in the hybrid model. To conclude, in low variance scenarios, CBs demonstrate their advantages of serving stable demand. From Theorem~\ref{Thm:low-var-case} and 
 \RemarkRef{Rmk:CB-FB}, we obtain the corollary as follows.

\begin{corollary} \label{Coro:low-var-1}
    When the standard deviation of ridership is low ($\sigma < \epsilon$), detour lengths of FBs are enormous, and the optimal hybrid design excludes FBs.
\end{corollary}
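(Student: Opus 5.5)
The plan is to combine Theorem~\ref{Thm:low-var-case} with the detour formula in Eq.~\eqref{eq:detour}, and then run a cost comparison that rules out $s_F>0$ at the optimum. Write $\sigma\to 0$ as a sequence of optimal designs $(f_C^*,K_C^*,s_F^*,K_F^*)$ indexed by $\sigma$.

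\textbf{Step 1: detours blow up.} By Theorem~\ref{Thm:low-var-case}, $P_C\to 0$ along the optimal designs. Since $D = L_y\frac{1-P_C}{P_C(2-P_C)} \sim \frac{L_y}{2P_C}$ as $P_C\to 0$, for any prescribed bound $M$ there is an $\epsilon>0$ such that $\sigma<\epsilon$ implies $D>M$. This gives the first claim of the corollary.

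\textbf{Step 2: FBs cost more than they save when $D$ is large.} Consider any design with $s_F>0$. Its FB operators' cost is $s_F(\gamma_0+\gamma_1K_F)$, and each FB passenger incurs in-vehicle cost at least $\pi_v D/(2V)$, which is unbounded by Step~1. I will bound what FBs can save. They save only rejection and CB costs on the demand above $Q_{CB}$, and that demand is controlled by $E[(X-Q_{CB})^+]$. With a fixed mean and $\sigma\to0$, and with $Q_{CB}\ge E(X)$ (the regime of case~2 of Theorem~\ref{Thm:low-var-case}), Cauchy--Schwarz/Chebyshev gives $E[(X-Q_{CB})^+]\le \sigma$. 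So the expected FB ridership $E(X_F)$, and hence the rejection saving $\Pi_R E(X_F)$, is $O(\sigma)$.

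I then compare with the design obtained by setting $s_F=0$ and keeping $(f_C,K_C)$. The change in cost is
\[
-s_F(\gamma_0+\gamma_1K_F) - (\text{FB user costs}) + \Pi_R\,E(X_F),
\]
and the last term is at most $\Pi_R\sigma$. Because $\gamma_0>0$, if $s_F$ is bounded below the removal is profitable once $\sigma$ is small. If instead $s_F\to0$, the waiting-time term $\pi_w\omega N(2L_x+D)E(X_F)/(s_FV)$ together with the shrinking $Q_{FB}$ shows the contribution of FBs is dominated by the cost of keeping them. Either way $s_F^*=0$ for $\sigma<\epsilon$. The other case of Theorem~\ref{Thm:low-var-case}, with $P_C\to1$, is handled as in that proof: there $D\to0$, FBs are effectively CBs, and relabeling them as CBs does not change the cost. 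So the optimal design can be taken to contain no FBs.

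\textbf{Main obstacle.} The hardest step is making Step~2 uniform. The optimal $(f_C,K_C)$ move with $\sigma$, and $D$ depends on the design through $P_C$, so a fixed-design argument is not enough. I would first show that $Q_{CB}^*$ stays in a compact neighbourhood of $E(X)$ by comparing with the deterministic optimum of Jansson's model, which gives a finite cost and so bounds the design variables. With that in hand, the Chebyshev-type bound $E[(X-Q_{CB})^+]=O(\sigma)$ holds uniformly, and the $\gamma_0$ fixed cost per FB gives a strictly positive penalty that $O(\sigma)$ savings cannot offset.
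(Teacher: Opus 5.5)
Your argument is correct in substance and follows the paper's skeleton for the first clause: Theorem~\ref{Thm:low-var-case} gives $P_C\to 0$, and Eq.~\eqref{eq:detour} turns this into $D\to\infty$. The difference is the exclusion step. The paper does not prove it; it cites \RemarkRef{Rmk:CB-FB} and adds that FBs with extensive detours impose excessive costs on their users. You prove it by comparing the optimum with the same $(f_C,K_C)$ and $s_F=0$. Your cost change $-s_F(\gamma_0+\gamma_1K_F)-c_{F,u}+\Pi_R E(X_F)$ is right, since $E[(X-Q_{CB})^+]-E[(X-Q_{HY})^+]=E(X_F)$. Your key estimate is on the benefit side: $E(X_F)\le E[(X-Q_{CB})^+]=O(\sigma)$ against the fixed cost $s_F\gamma_0$. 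It never uses the detour formula, so your exclusion argument would survive even bounded detours, as in the footnote to \RemarkRef{Rmk:CB-FB}.

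The paper's mechanism, in turn, gives a one-line finish to your own comparison. Once $D>2(V\Pi_R/\pi_v-L_x)$, the in-vehicle cost $\pi_v(L_x+D/2)/V$ of each FB passenger already exceeds $\Pi_R$. Hence $c_{F,u}\ge \Pi_R E(X_F)$, and removing the FBs changes the cost by at most $-s_F(\gamma_0+\gamma_1K_F)<0$ for every $s_F>0$. That version needs no Chebyshev bound and no case split on $s_F$. Since the comparison is made at each fixed $\sigma$, neither version needs the uniformity you list as the main obstacle (compactness via Jansson's optimum).

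If you keep your route, tighten three points.
\begin{enumerate}
\item The condition $Q_{CB}\ge E(X)$ is not part of the theorem's conclusion, but you do not need it. $P_C\to 0$ and Chebyshev force $\liminf Q^*_{CB}\ge E(X)$, hence $E[(X-Q^*_{CB})^+]\le \sigma+(E(X)-Q^*_{CB})^+\to 0$, which is all you use.
\item The case $s_F\to 0$ is vacuous. If $E(X_F)>0$, optimality against removal forces the per-passenger FB waiting cost below $\Pi_R$, which bounds $s_F$ away from zero. If $E(X_F)=0$, removal saves $s_F(\gamma_0+\gamma_1K_F)>0$ outright.
\item Drop the $P_C\to 1$ relabeling sentence. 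It is redundant once you have taken $P_C\to 0$ from the theorem, and it is not true in this model. At $D=0$, Eq.~\eqref{eq:hybrid_cap} gives $Q_{FB}=K_Fs_FV/(2\omega L_x)$, while CBs with the same fleet and seat count carry $KsV/(2L_x)$. So relabeling is cost-neutral only if $\omega=1$.
\end{enumerate}
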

Including FBs with extensive detours is not optimal, as it imposes excessive waiting costs on FB passengers. However, detour lengths decrease as ridership randomness increases, making FBs efficient under high ridership variability and integrated into the optimal hybrid model. This relationship is formally established in Theorem \ref{Thm:high-var-case}.

\begin{theorem}\label{Thm:high-var-case}
    If $\sigma$ is sufficiently large, the optimal fleet size of both CBs and FBs is strictly positive.
\end{theorem}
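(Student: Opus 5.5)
The plan is to argue by contradiction in two parts, ruling out the two degenerate corner designs for large $\sigma$. The first corner has no CBs ($f_C=0$ or $K_C=0$). The second has no FBs ($s_F=0$ or $K_F=0$). In each case we exhibit a perturbation of the design that strictly lowers the objective \eqref{eq:objective}. The common tool is the marginal value of hybrid capacity. Under a large rejection penalty, the derivative of the rejection cost with respect to $Q_{HY}$ is $-\Pi_{REJ}P_R=-\Pi_{REJ}(1-\mathbf{F}_\sigma(Q_{HY}))$. By Assumption~\assumpref{ASP:pax_dis}, for any fixed capacity level this tail probability grows as $\sigma$ grows. So for $\sigma$ large the system has a strong incentive to keep adding capacity, and the question becomes which type of bus provides it most cheaply in each regime.

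\textbf{Step 1 (CBs are positive).} Suppose the optimum had $Q_{CB}=0$. Then every stop is underserved, so $P_C=1$. By \eqref{eq:detour} this gives $D=0$, and FBs behave exactly like CBs running at frequency $s_F V/(2\omega N L_x)$. The only difference is that the FB waiting cost carries the factor $\omega N(2L_x)/(s_FV)$ instead of $1/(2f_C)$. We would compare these and use the fact that, with $D=0$, converting a small fleet of FBs into CBs with the same $K$ and the same total capacity leaves operators' and rejection costs unchanged. The conversion weakly lowers waiting cost as long as $\omega\ge 1/2$, since the fixed schedule gives the half-headway wait. It also leaves the model with $P_C<1$, so a strictly better point with $f_C>0$ exists. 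If $\omega<1/2$ this argument fails. In that case we would instead use the marginal analysis below: at $P_C=1$, adding a CB increment costs the same as an FB increment but gives the FBs $D>0$, and through the factor $(1+D/L_y)$ this raises FB throughput by more than it raises FB cycle length.

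\textbf{Step 2 (FBs are positive).} Consider the best CB-only design $(f_C,K_C)$ for a given $\sigma$. By the single-line analysis (Theorem~\ref{Thm:Capacity}), $f_C$ stays near the Mohring level and $K_C$ grows with $\sigma$. We would show that for large $\sigma$ the CB-only optimum must accept a non-negligible $P_C=1-\mathbf{F}_\sigma(NK_Cf_C)$ that is bounded away from $0$. The reason is that pushing $P_C\to 0$ under a spreading distribution requires $K_C$ to grow without bound, at cost $\gamma_1 N f_C 2L_x/V$ per unit of capacity, and this capacity is wasted on most realizations. Once $P_C\ge p_0>0$, \eqref{eq:detour} bounds the detour by $D\le L_y(1-p_0)/(p_0(2-p_0))$. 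We then compute the directional derivative of \eqref{eq:objective} in $s_F$ at $s_F=0$, holding $K_F$ fixed at a small value.

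\begin{itemize}
\item \emph{Gain.} Each FB adds capacity $K_F V(1+D/L_y)/(\omega(2L_x+D))$ and saves $\Pi_{REJ}P_R$ per unit.
\item \emph{Cost.} Each FB costs $\gamma_0+\gamma_1K_F$, plus user costs on a ridership $E(X_F)$ that is at most the capacity added.
\end{itemize}

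The FB waiting-cost term is $\pi_w\omega N(2L_x+D)E(X_F)/(s_FV)$. Since $E(X_F)\le Q_{FB}\propto s_F$, this term stays bounded as $s_F\to 0$, so it does not blow up. For large $\Pi_{REJ}$ and $P_R$ bounded below, which Assumption~\assumpref{ASP:pax_dis} guarantees for large $\sigma$, the derivative is therefore negative. Hence $s_F^*>0$.

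\textbf{Main obstacle.} The hardest part is Step 2's claim that the optimal CB-only design keeps $P_C$ bounded away from zero as $\sigma$ grows. Otherwise $D$ blows up and FBs become useless, as in Corollary~\ref{Coro:low-var-1}. I would first try a direct comparison at the margin. Raising $K_C$ lowers $P_C$ by $\mathbf{f}_\sigma(Q_{CB})\,d Q_{CB}$. For spread-out distributions the density $\mathbf{f}_\sigma$ is small everywhere, of order $1/\sigma$, so the marginal CB capacity needed to cut $P_C$ grows like $\sigma$, while the rejection savings it buys are capped at $\Pi_{REJ}$ per passenger. This yields an explicit lower bound on $P_C$ at the optimum. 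If that is too loose, I would fall back on showing that the CB-only cost grows faster in $\sigma$ than a hybrid design that fixes $P_C=1/2$ and covers the tail with FBs.
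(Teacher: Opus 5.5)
\textbf{The no-FB corner cannot be left along a derivative.} Your Step~2 rests on a negative one-sided derivative in $s_F$ at $s_F=0$, and in this model that never occurs. Since $\partial E(X_F)/\partial Q_{FB}=P_R$ and $P_R=P_C$ at $Q_{FB}=0$, you have $E(X_F)=P_C Q_{FB}+o(Q_{FB})$ with $Q_{FB}=K_F s_F V(1+D/L_y)/(\omega(2L_x+D))$. So the FB waiting cost in \eqref{eq:objective} tends to $\pi_w N P_C K_F(1+D/L_y)>0$ as $s_F\to 0^+$. It stays bounded, as you note, but it does not vanish. At fixed $K_F$ the objective therefore jumps \emph{up} at the corner. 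Making $K_F$ small removes the jump, but then the per-vehicle cost $\gamma_0$ outweighs a saving proportional to $K_F$.

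The same point can be seen without fixing $K_F$. At fixed FB capacity $q=Q_{FB}$, the $s_F$-dependent part of the cost is $s_F\gamma_0+\pi_w\omega N(2L_x+D)E(X_F)/(s_F V)$. Its minimum, $2\sqrt{\gamma_0\pi_w\omega N(2L_x+D)E(X_F)/V}$, is of order $\sqrt{q}$, whereas the rejection saving $\Pi_{REJ}P_C q$ and every other term are $O(q)$. So $s_F=0$ is always a local minimum in the FB directions. This is exactly the discontinuous entry of FBs stressed in the discussion after Theorem~\ref{Thm:high-var-case}.

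What is needed is a finite comparison: show that some design with a non-infinitesimal FB fleet beats the CB-only optimum. For example, take a moderate fixed $K_F$ and $s_F$ beyond the break-even point where the linear saving overtakes the jump, and control the second-order terms of $E(X_F)$ and $E(X_R)$ on that range. Your closing fallback points this way, but it is not carried out.

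Step~1 fails too. At $D=0$, a fleet $s$ of FBs supplies capacity $KsV/(2\omega L_x)$, while the same fleet of CBs supplies $KsV/(2L_x)$. Matching capacity therefore needs a CB fleet $s/\omega$, which multiplies operators' cost by $1/\omega$; it is not unchanged. Your $\omega\ge 1/2$ threshold comes from the equal-\emph{fleet} comparison, whereas your rejection claim needs equal \emph{capacity}. At the paper's $\omega=1/2$, equal fleets give the same operators' cost and the same per-passenger waiting and in-vehicle cost, but FBs carry twice the capacity. Converting FBs into CBs is then strictly worse. The fallback has three problems:
\begin{itemize}
\item a CB increment is not as cheap as an FB increment, since its operators' cost per unit capacity is $1/\omega$ times larger;
\item the first-order detour gain $L_y\mathbf{f}_\sigma(0)\,dQ_{CB}$ vanishes for the lognormal;
\item CB entry carries its own square-root term $2\sqrt{\pi_w N L_x\gamma_0 E(X_C)/V}$.
\end{itemize}

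\textbf{(A2) does not supply the tail estimates you use.} (A2) only orders the CDFs beyond a threshold that depends on the pair $(\sigma_1,\sigma_2)$. It gives no monotonicity of $1-\mathbf{F}_\sigma(Q)$ at a fixed $Q$, no lower bound on $P_R$, and no density bound. Take the fixed-mean lognormal family used in the paper, which satisfies (A2). Write $s^2=\ln(1+\sigma^2/\mu^2)$ and let $\Phi$ be the standard normal CDF. Then $1-\mathbf{F}_\sigma(Q)=1-\Phi\bigl(\ln(Q/\mu)/s+s/2\bigr)\to 0$ for every fixed $Q$; for $\sigma_1>\sigma_2$ the tails cross at $\mu e^{s_1 s_2/2}$, which escapes to infinity. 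Also, the density at the median $\mu e^{-s^2/2}$ blows up rather than being $O(1/\sigma)$. Hence ``$P_R$ bounded below'', ``$\mathbf{f}_\sigma=O(1/\sigma)$'' and your lower bound on $P_C$ all need an extra hypothesis on the family (for example a location--scale structure), which you would have to state.

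For comparison, the paper never perturbs from a corner. It writes the stationarity conditions and derives $f_C^{*2}=\pi_w V E(X_C)/(4NL_x\gamma_0)$. From the $K_F$-condition it obtains the closed form \eqref{eq:OPT_S_F_FINAL} for $s_F^*$. It then argues that for large $\sigma$ detours are moderate and the rejection penalty makes the denominator positive, and it uses $P_C\ge P_R>0$ for the CB part. That route characterizes the interior stationary point and so never meets the square-root singularity. It does not compare that point with the corner either, so the finite comparison above is the step any rigorous version must supply.
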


 Theorem~\ref{Thm:high-var-case} describes that the increase in ridership randomness introduces FBs into the hybrid model. The proof of Theorem~\ref{Thm:high-var-case} is in the appendix~\ref{apdx:comp_thm_high}. In the proof, an increase in system randomness amplifies the rejection penalty while simultaneously reducing detours. This leads to a more efficient rebalancing. Consequently, passengers can wait for a rebalanced FB in a reasonable time rather than being rejected by CBs. The optimal frequency of CBs still linearly depends on the square root of the served CBs ridership. The FBs rebalanced from other vertical lines enable the system to absorb high demand with a relatively greater probability compared to the low-variance case. As a result, the integration of FBs contributes positively to the hybrid bus system under high ridership randomness.
 
From Corollary \ref{Coro:low-var-1}, note that the enormous detours prevent FBs from the optimal design when the ridership randomness is low.  The length of detours decreases, and the capacity of FBs increases with ridership randomness. When the standard deviation of ridership exceeds a threshold, the non-negative constraint on the fleet size of FBs is no longer effective, and FBs join the optimal hybrid design. This suggests that not only FBs appear but also that they appear in a discontinuous fashion.

The optimal hybrid design either excludes flexible buses (FBs) entirely or incorporates a substantial number of them, as demonstrated by Theorem~\ref{Thm:low-var-case} and Theorem~\ref{Thm:high-var-case}. The inclusion of only a minimal number of FBs to address ridership stochasticity is inefficient. This finding provides insights into the strategic integration of CBs and FBs, offering valuable insights into the design principles of hybrid bus systems.

\section{Numerical analysis and discussion}
We implement a grid search across ridership randomness, given the non-convexity and non-linearity of the extended single-line model and the hybrid model. We assume that the ridership follows the logarithmic normal distribution to disentangle the mean and the variance. Finally, we discuss the behavior of the optimal fleet size and bus capacity with respect to the standard deviation of ridership. The exogenous parameters for the numerical experiments are set as in Table~\ref{tab:constants}. We define the values of $N$, $L_x$, $L_y$ and $\Pi_R$, and the remaining parameter values follow \parencite{fielbaum_transit_2018}. Note that we set a high rejection penalty to prioritize passenger service.

\begin{table}[htbp]
\centering
\caption{Parameters of the extended single-line model and the hybrid model}
\label{tab:constants}
\begin{tabularx}{\textwidth}{l c c X}
\hline
Type & Note & Value & Description \\
\hline
\multirow{2}{*}{Stochastic Ridership} 
 & $\mu$ & 100 & Expected ridership (pax/h) \\
 & $\sigma$ & \{0,10,\dots,100\} & Standard deviation of ridership in the single-line model (increments of 10, pax/h) \\
 &         & \{0,10,\dots,200\} & Standard deviation of ridership in the hybrid model (increments of 10, pax/h) \\
\hline
\multirow{10}{*}{Exogenous Constants}
 & $N$ & 10 & Number of origin–destination stops in the region \\
 & $L_y$ & 0.5 & Distance interval between bus stops (km) \\
 & $L_x$ & 20 & Vertical distance of the region (km) \\
 & $V$ & 30 & Average speed of CBs and FBs (km/h) \\
 & $T$ & $4/3$ & Cycle time of CBs (h) \\
 & $\Pi_R$ & 30 & Penalty cost of rejection (US\$/pax) \\
 & $\pi_w$ & 4.44 & Monetary value of waiting time (US\$/h) \\
 & $\pi_v$ & 1.48 & Monetary value of in-vehicle time (US\$/h) \\
 & $\gamma_0$ & 4.02 & Constant component of operator cost (US\$) \\
 & $\gamma_1$ & 0.89 & Variable cost coefficient (US\$/seat) \\
 & $\omega$ & 0.5 & The fraction of estimated waiting time for FBs \\
\hline
\end{tabularx}
\end{table}

\subsection{The extended single-line model}
We analyze the behavior of the optimal frequency and bus capacity as the standard deviation of ridership increases. Ridership is modeled with a mean of 100 (pax/h) and a standard deviation ranging from 10 to 100 (pax/h) in increments of 10 (pax/h). 

\begin{figure}[htbp]
    \centering    \includegraphics[width = \textwidth]{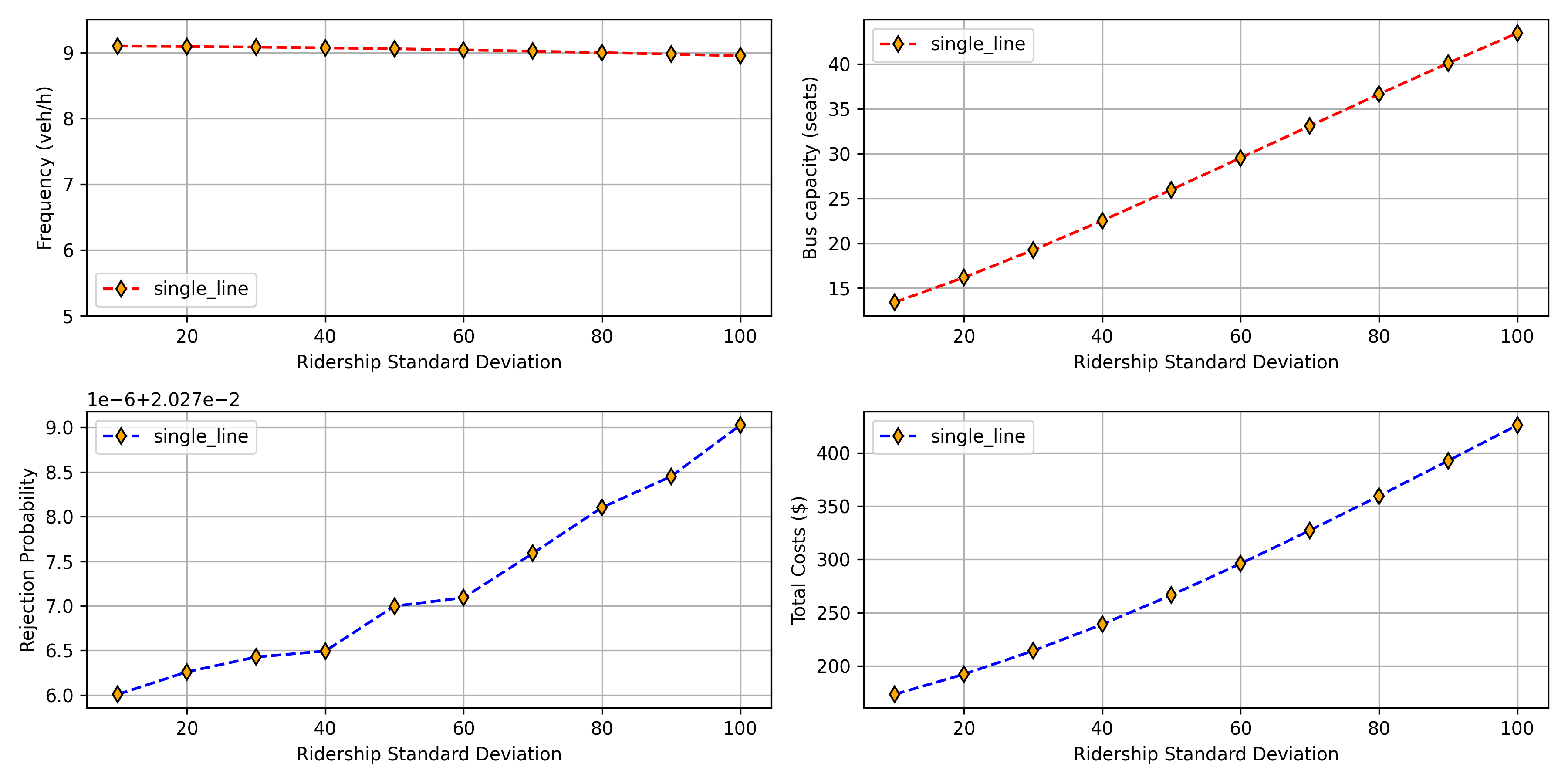}
    \caption{Optimal designed variables of the extended single-line model}
    \label{fig:OP-INPUT-Single}
\end{figure}

We solve the optimization model to obtain the optimal frequency $f$ and bus capacity $K$. The behavior of the optimal frequency and bus capacity in the extended single-line model is shown in Figure~\ref{fig:OP-INPUT-Single}. As the standard deviation of the ridership increases, the optimal frequency stays almost constant while the optimal capacity significantly increases, consistent with Theorem \ref{Thm:Capacity}. Given a substantial penalty, the rejection probability is low but increases slightly as the standard deviation of ridership rises. The behavior of the optimal frequency and bus capacity corresponds to the theoretical analysis findings in the extended single-line model. As discussed in the theoretical analysis, the optimal frequency increases with expected ridership. Passengers prefer higher frequency to reduce travel costs, but this increases operators' costs. Thus, frequency is the trade-off between users' and operators' costs. However, in our experiments, expected ridership remains constant while ridership randomness increases. Greater bus capacity absorbs ridership peaks and expands the system's ability to handle variability. Additional numerical experiments with an expected demand of 1,000 passengers per hour (presented in the appendix~\ref{appd:more_exps}) confirm the robust behavior of the optimal design variables across different demand scenarios.

\begin{figure}[htbp]
    \centering    \includegraphics[width = 0.8\textwidth]{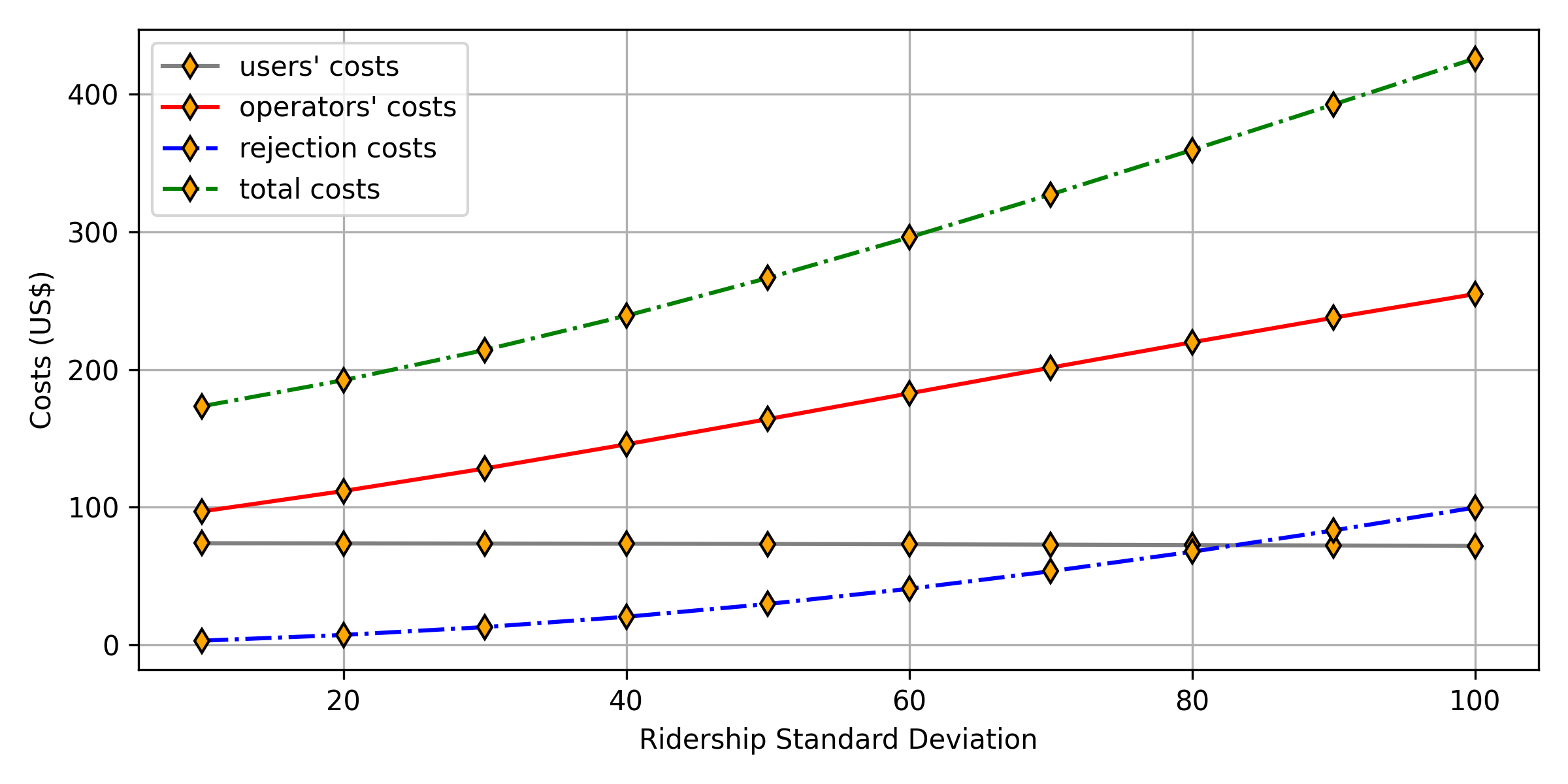}
    \caption{Breakdown of the total costs under optimal design of the extended single-line model.}
    \label{fig:SingleCosts}
\end{figure}

Users', operators' and rejection costs are illustrated in Figure~\ref{fig:SingleCosts}. When the standard deviation of ridership increases, the optimal design increases the bus capacity to mitigate the rising probability of rejection, leading to higher operators' costs. However, this adjustment only partially offsets the mismatch between the supply and stochastic demand, leaving rejections that continue to grow. Given that the mean ridership remains constant, the served ridership declines slightly, resulting in a slight decrease in the users' costs (note that these costs only consider the served passengers).

To conclude, given a significant rejection penalty, the optimal frequency serves the mean ridership, and the optimal bus capacity accommodates the ridership randomness. Through both theoretical and numerical analysis, our findings robustly highlight this knowledge gap in managing stochastic ridership variability and emphasize the role of capacity in addressing demand uncertainty.

\subsection{The hybrid model}
In the hybrid model, we analyze the behavior of the optimal design variables for CBs and FBs, when increasing the standard deviation of ridership from 0 to 200 (pax/h) in increments of 10 (pax/h) while maintaining a mean ridership of 100 (pax/h). 

As shown in Figure~\ref{fig:OP-INPUT-Hybrid}, when the standard deviation of ridership is low, CBs handle all of the demand, capitalizing on their advantage of providing consistent service capacity for stable demand, which is consistent with Theorem \ref{Thm:low-var-case}.

With increasing ridership randomness, FBs are integrated into the hybrid model to address this variability, as expected due to Theorem \ref{Thm:high-var-case}. That is, the hybrid model has a binary state: either CBs serve all the ridership, or CBs serve the majority and FBs serve the minority. This aligns with Theorem \ref{Thm:low-var-case} and Theorem \ref{Thm:high-var-case}.
\begin{figure}[htbp]
    \centering
    \includegraphics[width = \textwidth]{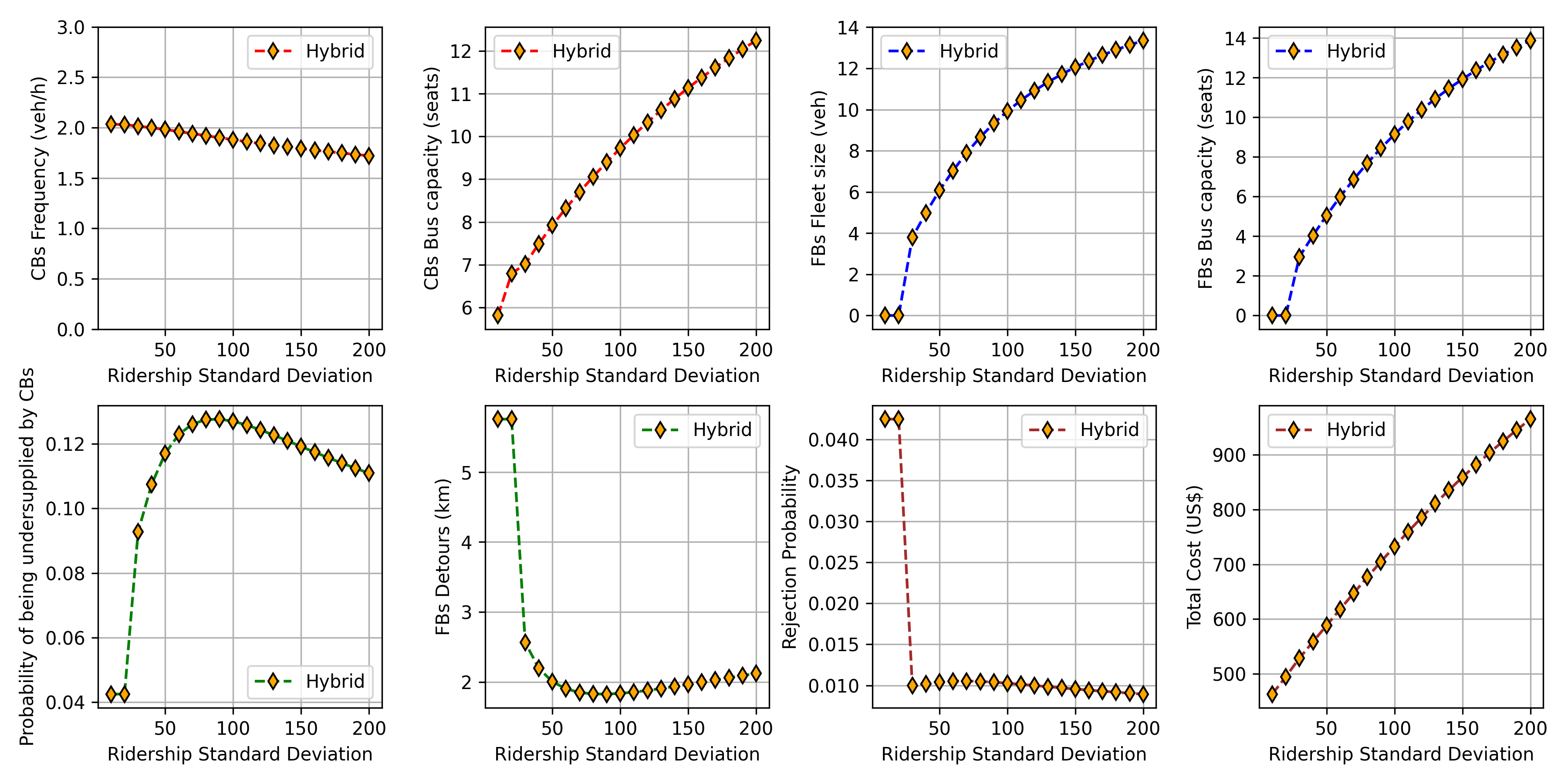}
    \caption{Optimal design variables for the hybrid model with mean ridership of 100 passenger per hour}
    \label{fig:OP-INPUT-Hybrid}
\end{figure}

 This transition is reasonable, as a higher standard deviation of ridership increases the probability of ridership deviating significantly from its expected value.  If we follow the idea that CBs are designed for the expected ridership in prior papers on deterministic ridership, the rejection probability will become high under such variability. Since the system is designed to serve ridership with a significant rejection penalty, the hybrid model employs a more adaptable capacity to accommodate stochastic ridership compared to the capacity for deterministic ridership. For this increased ridership, why do we integrate FBs into the hybrid model rather than a model consisting of CBs solely? Let us conduct a further analysis. 
 
 \begin{figure}[htbp]
    \centering
    \includegraphics[width = 0.8\textwidth]{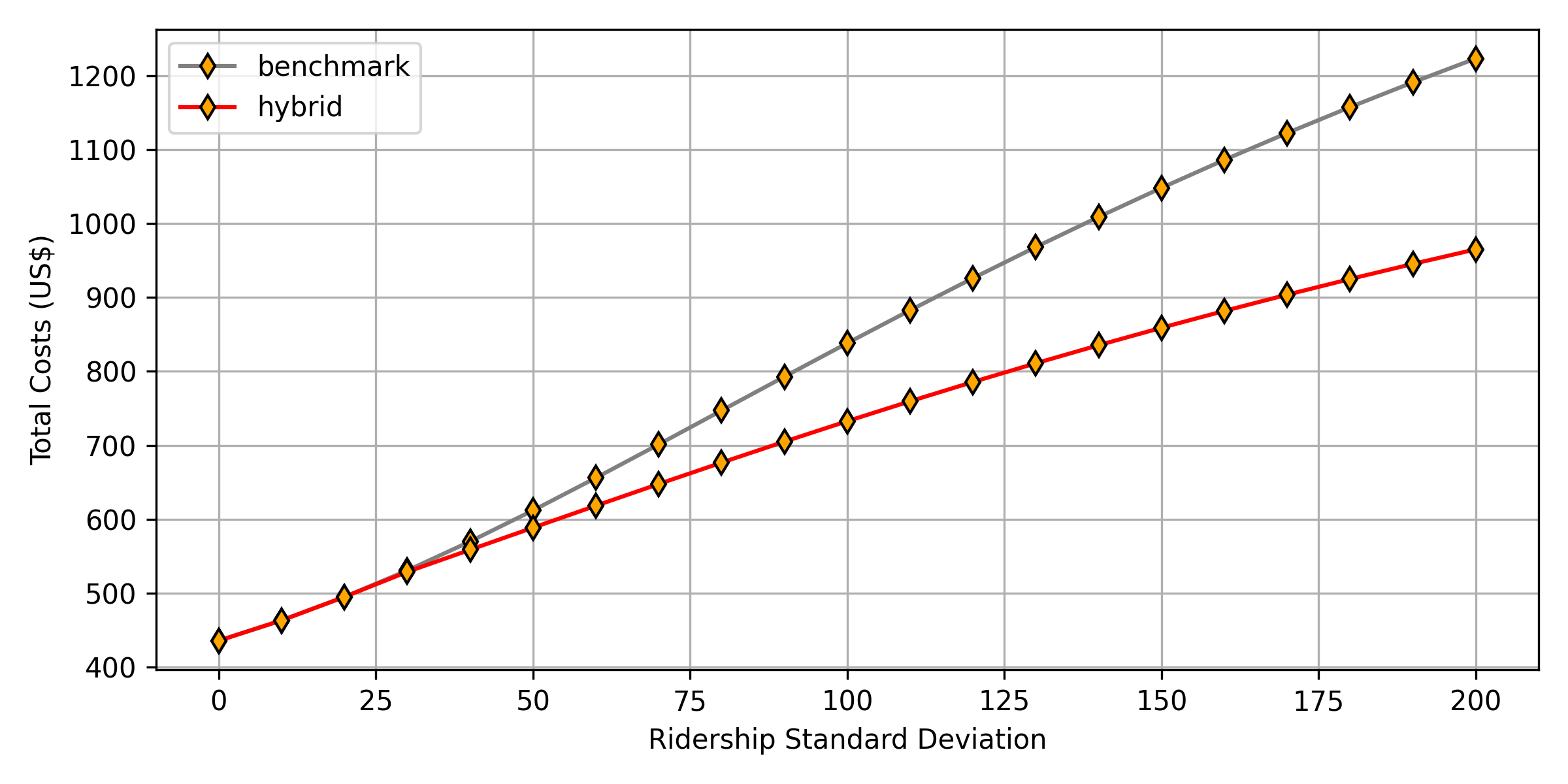}
    \caption{Comparison of the hybrid model to the benchmark model with mean ridership of 100 passenger per hour}
    \label{fig:OP-COM-Hybrid}
\end{figure}

As shown in Figure~\ref{fig:OP-COM-Hybrid}, the hybrid model outperforms the benchmark model, which comprises only CBs for $N$ bus lines. This superior performance stems from route flexibility. As the ridership randomness increases across bus stops, FBs take shorter detours, while CBs require greater capacity for all lines. Once the increase in capacity costs exceeds the detour costs, FBs are more economical than CBs. However, CBs remain necessary to ensure that detours remain within a reasonable range. In this case, under a high standard deviation of ridership, it is more economically efficient to incorporate FBs in a hybrid system where CBs serve a portion of the passenger demand to leverage economies of scale while FBs manage ridership randomness through route flexibility. Transcending the role of flexible mobility as a feeder \parencite{aravind_equity_2024, fielbaum_beyond_2024}, flexible buses could play a crucial role in the public transport system, especially with a high standard deviation of ridership. A further numerical analysis of the hybrid model is presented in the appendix \ref{appd:more_exps}, in which the expected ridership is 1,000 pax/h.

\begin{figure}[htbp]
    \centering
    \begin{subfigure}[htbp]{\textwidth}
        \centering
        \includegraphics[width=0.8\textwidth]{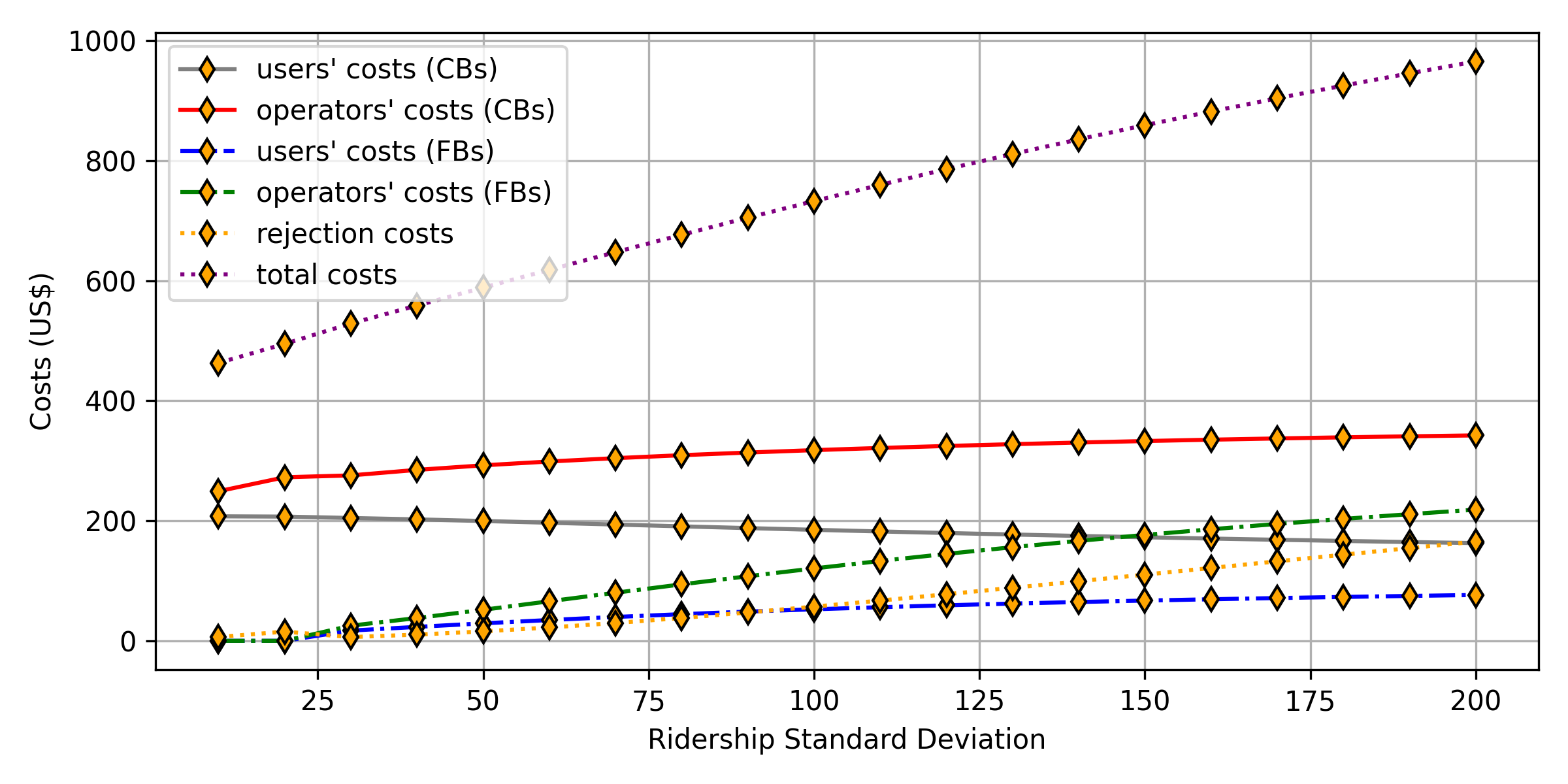}
        \caption{Breakdown of total costs under the optimal hybrid design.}
        \label{fig:HybridCosts}
    \end{subfigure}
    \vfill
    \begin{subfigure}[htbp]{\textwidth}
        \centering
        \includegraphics[width=0.8\textwidth]{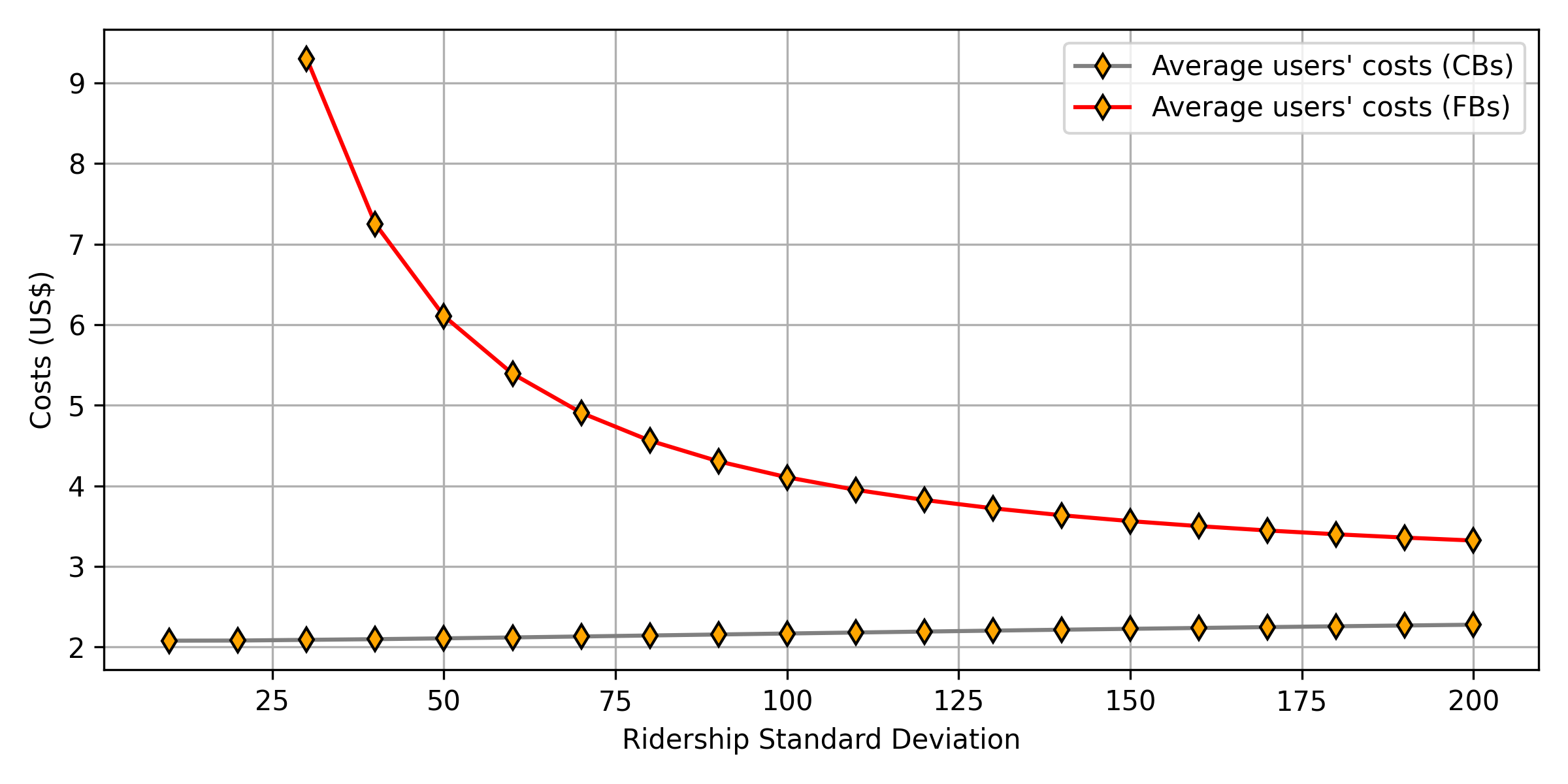}
        \caption{Average user costs per passenger under the optimal hybrid design.}
        \label{fig:HybridAvgCosts}
    \end{subfigure}
    \caption{System‐level and user‐level cost outcomes of the hybrid model under optimal design.}
    \label{fig:HybridComparison}
\end{figure}

The detailed breakdown of total costs is shown in Figure~\ref{fig:HybridCosts}. For CBs, user costs decrease slightly, whereas operators costs increase moderately as demand randomness rises to accommodate the randomness. In contrast, operator and user costs for FBs both increase with greater ridership randomness. This results from the lower costs by serving the passengers by FBs rather than rejecting them. In order to incorporate FBs, a substantial capacity of CBs is necessary to enable reasonable detours of FBs. 

Figure~\ref{fig:HybridAvgCosts} shows that after FBs are integrated into the hybrid model, their average users' costs consistently exceed those of CBs. Passengers incur higher costs when served by FBs, yet they are served rather than being rejected. As the ridership randomness grows, the average costs decline, demonstrating economics of scale.

\begin{figure}[htbp]
    \centering
    \includegraphics[width = \textwidth]{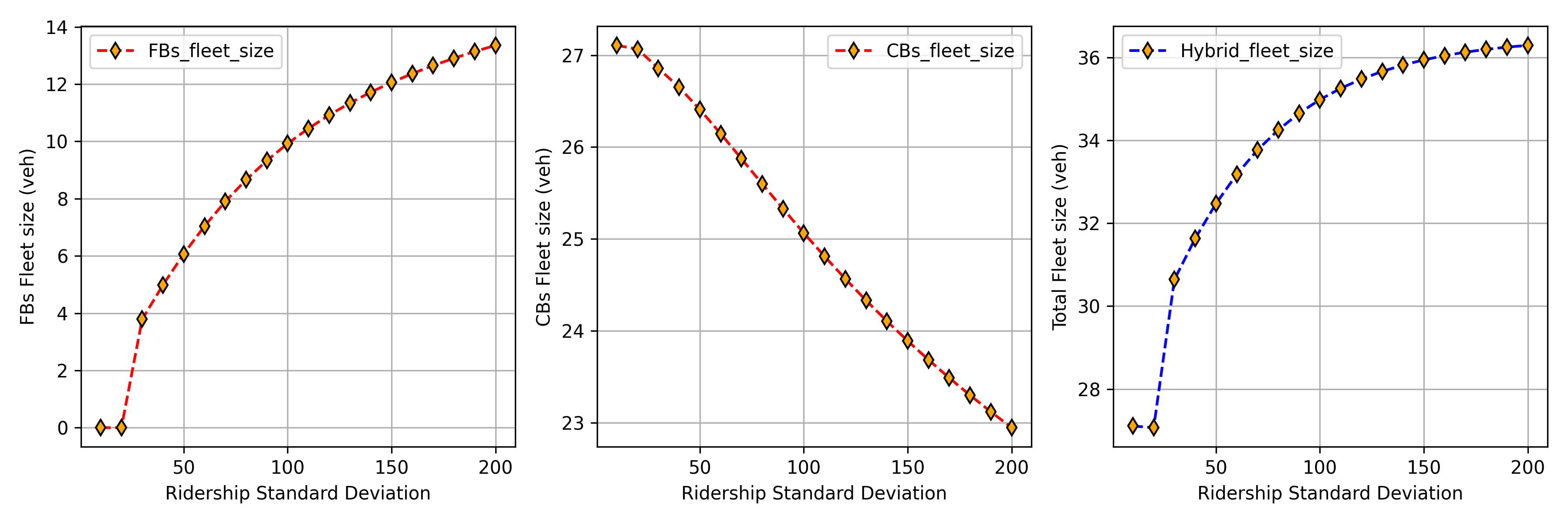}
    \caption{The fleet size of CBs, FBs and the hybrid system under increasing ridership randomness, with the mean ridership of 100 (pax/h)}
    \label{fig:TOTAL_FLEET_SIZE}
\end{figure}
Figure~\ref{fig:TOTAL_FLEET_SIZE} illustrates the fleet size of CBs, FBs, and the hybrid system. Under low ridership randomness, the fleet size of CBs remains almost constant with no FBs in the hybrid system. This aligns with Theorem \ref{Thm:Capacity}; the hybrid system accommodates the randomness with bus capacity rather than frequency. As the randomness grows, the fleet size of the hybrid system increases, which mainly results from the growth in that of FBs. As the standard deviation of ridership increases from 50 to 100, the fleet size of CBs decreases by approximately 1.5 vehicles, whereas the fleet size of FBs increases by 4 vehicles. Overall, the total fleet size increases by around 2.5 vehicles. This shift in fleet composition from CBs to FBs benefits the hybrid system by introducing more flexibility to pick up passengers underserved by CBs, rather than rejecting them.

The interaction between CBs and FBs is pronounced under different levels of the standard deviation of ridership, demonstrating how scale economics and route flexibility become dominant factors in service. Under a low standard deviation of ridership, CBs efficiently serve ridership, as expanding capacity is sufficient to reduce passenger rejections. However, as randomness increases, FBs become increasingly important in the hybrid model, utilizing route flexibility to adapt to fluctuating demand. 

\section{Summary of Main Insights and Role of the Critical Assumptions} \label{subsec:AspDis}

 We extend the single-line model to incorporate stochastic ridership and construct a multi-line model in which both CBs and FBs serve stochastic demand. From these models, we derive four main findings. We first summarize and compare them with previous studies, to establish their robustness. Later, we summarize the main assumptions that were required to determine these conclusions, and analyze their role.
\vspace{0.5em}
\noindent
 
 \textbf{Discussion of our conclusions}
 \begin{itemize}
     \item Bus capacity adjusts to accommodate ridership randomness rather than changes in frequency. This finding aligns with practical observations that operators often distribute capacity across the network to absorb sudden increases in passenger demand \parencite{cats_planning_2015}. However, such strategies are typically applied in disruption management rather than in the design of daily bus operations. Our results suggest that additional capacity should be allocated when ridership becomes more stochastic.
     \item FBs are excluded from the hybrid model when ridership randomness is low. This result aligns with insights from agent-based models \parencite{rich_fixed_2023,calabro_fixed-route_2022}. These studies emphasize that fixed-route bus services are the preferred option when passenger demand from metro or light-rail stations occurs during peak hours; while this is often related to having a larger demand volume, these periods are also less affected by randomness. We remark that it is also an intuitive result: if the demand is deterministic, there are no gains by deploying FBs, but there are costs (their detours).
     \item When the randomness increases, FBs are integrated with CBs to prevent rejections. FBs can reach passengers unserved by CBs within reasonable detour times, which reduces the total costs compared to rejecting them. With online coordination, such integration becomes feasible and has the potential to reduce overall system costs \parencite{calabro_adaptive_2023}.
     \item An efficient integration of FBs into public transport requires that the number of FBs is not too low. With few FBs, most of the ridership is served by CBs, so FBs must take unreasonable detours to pick up the remaining passengers. Such inefficiency prevents the effective integration of a small number of on-demand services. This is analogous to the ``critical mass'' concept for ridepooling \parencite{soza-parra_shareability_2024}: if ridepooling demand is small, detour penalties become high and the system is inefficient. Similarly, \textcite{fielbaum_improving_2024} ran simulations in Santiago in Berlin integrating fixed lines with ridepooling, finding that users' costs could increase if the number of flexible vehicles was too low.
 \end{itemize}
 
 These findings provide general insights, but they are illustrative rather than directly applicable, as we employ the continuous approximation approach and several simplifying assumptions are adopted in the approximation. In what follows, we examine the assumptions in detail and discuss the implications of relaxing them.
\vspace{0.5em}
\noindent
 
 \textbf{Discussion of the assumptions}
 \begin{itemize}
     \item \Assumpref{ASP:rej_pax} Passenger Rejection: We assume the passengers leave when the frequency and capacity cannot address such amount of ridership. Alternatively, we could assume that rejected passengers would wait for the next bus instead. This assumption is not critical to our results. As almost every bus has spare seats for passengers (rejection rates are very low), if a passenger decides to wait for the next bus it would leave the results unaffected.
     \item \Assumpref{ASP:pax_dis} Ridership Distribution: For both single-line and hybrid model, we assume that the cumulative distribution function of ridership decreases with its variance at the right tail. As discussed in section~\ref{sec:theo_single}, this is a natural assumption that is fulfilled in most common distributions, including Poisson, Log-normal and Gamma distributions. In our models, since the service is designed to carry passengers rather than reject them, a high rejection penalty per passenger shifts the optimal design toward the right tail of the ridership distribution. This assumption is natural, but is also critical for the results in Theorem~\ref{Thm:Capacity}, \ref{Thm:low-var-case} and \ref{Thm:high-var-case}. 
     \item \Assumpref{ASP:one_dir} One-directional Demand: In both the single-line and the hybrid model, we assume demand to be in one direction, flowing from the upstream stop to the downstream stop, with no demand in the opposite direction. This assumption could be dropped, in which case the system capacity is determined by the higher demand of high flow direction.  It is adopted here to keep the analysis tractable and transparent.
     \item \Assumpref{ASP:two_stop} Two-stop Lines: If the lines had intermediate stops, our model would no longer be tractable. New phenomena might arise, depending on the correlation between different stops. In particular, by the law of large numbers, one could expect that randomness might be mitigated, as different stops could compensate for each other's random fluctuations. Regardless, the four conclusions summarized above remain valid, but the numerical results could differ. 
     %The results could be generalized to bus lines that have more than two bus stops. This assumption could be dropped when considering the correlation between bus stations. We adopt two-stop bus lines to avoid interdependence and simplify the approximation of in-vehicle time. Otherwise, the overcrowding in the upstream bus stop could lead to irregularities (i.e. stop skipping, bus bunching).
     \item \Assumpref{ASP:Homo_line} Identical Lines: In the hybrid model, we assume that the bus lines have independent and identical design. FBs are rebalanced under this assumption, thereby ignoring spatial heterogeneity in central and bordering lines. This assumption could be dropped by replacing the homogeneous $P_C$ with that for each bus stop in Eq.~\eqref{eq:detour}. Asymmetrical lines would likely modify the threshold at which FBs appear, but the binary behavior of the system would remain the same: if there are too few FBs, the expected detour in Eq.~\eqref{eq:detour} becomes too long, regardless of whether there is a single $P_C$ or line-specific values.
     
     %However, our findings are still reasonable with this replacement. Symmetric bus lines result from spatial homogeneity, which is widely assumed in the literature. It prevents FBs from clustering along central bus lines, which could otherwise leads to imbalanced capacity and service deterioration over time. Future work could relax this assumption to account for spatial heterogeneity, and simulation-based analysis could further explore its implications.
 \end{itemize}

\section{Conclusion}
Moving beyond the deterministic ridership often assumed in the literature on strategic bus design, we introduce stochastic ridership. To accommodate such variability, we extend the single-line model with rejections to design the fleet size and bus capacity. We theoretically analyze the behavior of the optimal frequency and bus capacity under increasing variance. We find that capacity, rather than frequency, should accommodate the ridership randomness. For rising ridership, we employ greater frequency to leverage economies of scale. However, for the increase in variability, we expand capacity to absorb potential peaks and ensure robustness. 

Further, to better address the randomness, we develop a hybrid model with rejections consisting of CBs and FBs, through a simplified spatial model consisting of several parallel lines. We analyze the behavior of optimal fleet size and bus capacity of CBs and FBs with respect to the increasing standard deviation of ridership. Our findings indicate that in low-variance scenarios, it is optimal to use only CBs, thus replicating the single-line several times. Under high-variance scenarios, on the other hand,  introducing FBs is more efficient than only expanding capacity since the integration of CBs and FBs outperforms a system relying solely on CBs. Under a high standard deviation of ridership, route flexibility contributes to managing ridership randomness: buses can be dynamically rerouted to serve passengers underserved by CBs, avoiding the inefficiency of simultaneous empty seats on some lines and passenger rejections on others. 

Our paper demonstrates the value of online coordination and route flexibility in bus design, especially for stochastic ridership. Future research could extend the proposed method in several aspects. First, this is a continuous approximation model that enables a theoretical analysis but requires relevant assumptions, so agent-based simulations would be a powerful tool to identify the limitations of our results. This challenge implies some algorithmic tasks, particularly when deciding how to re-route the buses when the lines are no longer parallel.

Second, we assume homogeneous variance in the hybrid model. Diverse land use patterns across the city give rise to distinct randomness in trip requests. Additional consideration can be given to improve this assumption. For example, we could extend the model by incorporating location-dependent probability distributions to optimize dynamic rebalancing strategies, advancing beyond traditional feeder service operations \parencite{fan_optimal_2024}. The heterogeneous ridership randomness for bus design is worth exploring. For CBs, increasing bus capacity can help accommodate the stochastic demand; however, extending this approach to dynamic operation could further reduce the total costs, including short-turning and dead-heading \parencite{cortes_integrating_2011}.

\appendix
\section{ Completion of the proof of Theorem \ref{Thm:Capacity}} \label{apdx:comp_thm_cap}
\begin{proof}
   Let us begin with the first-order conditions and obtain the first-order derivatives as follows
    \begin{align}
    &\frac{\partial G}{\partial f} = -\pi_w \frac{1}{2f^2} E(X_B) + \left(\pi_w\frac{1}{2f}+\pi_v\frac{T}{2}\right) \frac{\partial E(X_B)}{\partial f} + T(\gamma_0 + \gamma_1 K) + \Pi_R \frac{\partial E(X_R)}{\partial f}\\
    &\frac{\partial G}{\partial K} = \left(\pi_w\frac{1}{2f}+\pi_v\frac{T}{2}\right) \frac{\partial E(X_B)}{\partial K} + fT\gamma_1 + \Pi_R \frac{\partial E(X_R)}{\partial K}
    \end{align}
    where the derivatives of the expected ridership with respect to $f$ are
    \begin{align}
        &\frac{\partial E(X_B)}{\partial f} = Q\mathbf{f}(Q) K + KP_R + Q(-\mathbf{f}(Q)K) = K P_R \label{eq:X_B_dev_freq}\\
        &\frac{\partial E(X_R)}{\partial f} = K\int_Q^{\infty} -\mathbf{f}(x) dx = -K P_R \label{eq:X_R_dev_freq}
    \end{align}
    Eq.~\eqref{eq:X_B_dev_freq} - \eqref{eq:X_R_dev_freq} result from the Leibniz integral rule. Similarly, derivatives of ridership with respect to $K$ are
    \begin{align}
        &\frac{\partial E(X_B)}{\partial K} = Q\mathbf{f}(Q)f + fP_R + Q(-\mathbf{f}(Q))f = f P_R \label{eq:X_B_dev_cap}\\
        &\frac{\partial E(X_R)}{\partial K} = f\int_Q^{\infty} -\mathbf{f}(x) dx = -f P_R \label{eq:X_R_dev_cap}
    \end{align}
    
    Setting the partial derivative of total costs with respect to bus capacity $K$ equal to zero, we derive the first equation from the first-order optimality conditions as
    \begin{align}
        \left(\pi_w\frac{1}{2f}+\pi_v\frac{T}{2}- \Pi_R \right)  P_R + T\gamma_1 = 0\label{eq:base1_single}
    \end{align}
    
    Setting the partial derivative of total costs with respect to $f$ equals zero yields
    \begin{align}
          &-\frac{\pi_w E(X_B)}{2f^2} +\Bigg[(\frac{\pi_w}{2f} +  \frac{\pi_vT}{2} - \Pi_R )  P_R + T \gamma_1 \Bigg]K + T\gamma_0 = 0 \label{eq:base2_single}
    \end{align}
    
    By substituting Eq.~\eqref{eq:base1_single} into Eq.~\eqref{eq:base2_single}, we obtain an equation from first-order conditions:
    \begin{align}
          & -\frac{\pi_w E(X_B)}{2f^2} + T\gamma_0 = 0  \label{eq:con_single_f}
    \end{align}
    
    By rewriting and rearranging Eq.~\eqref{eq:base1_single} and Eq.~\eqref{eq:con_single_f}, we derive equations from first-order conditions as:
    \begin{align}
        & 2T\gamma_0 f^2 = \pi_w E(X_B) \label{eq:first_order_base1}\\
        &\Omega P_R + T \gamma_1 = 0 \label{eq:first_order_base2}\\
        &\Omega = \frac{\pi_w}{2f} + \pi_v \frac{T}{2} - \Pi_R 
    \end{align}
     where $\Omega$ is the gap between the user's cost for each passenger and the rejection penalty. The result of this subtraction is negative ($\Omega < 0$), as we design the bus system to serve the ridership rather than rejecting them.
    
    Starting from what we obtain in first-order conditions, we incorporate the standard deviation of ridership into the theoretical framework to analyze how it affects the optimal solution. Introducing the variance parameter $\sigma$ into Eq.~\eqref{eq:single_obj}, we express the total costs as a function $G(f(\sigma), K(\sigma))$, where $\sigma$ represents the standard deviation of ridership. Through the first-order conditions for optimality, we establish Eq.~\eqref{eq:single_first_order_1}- Eq.~\eqref{eq:single_first_order_2}, noting that the partial derivatives have to be equal to zero when evaluated in the optimal frequency and capacity $f^*,K^*$.
    \begin{align}
    \frac{d }{d\sigma}\Big[\frac{\partial G}{\partial f} (f^*(\sigma),K^*(\sigma)) \Big]  = 0 \label{eq:single_first_order_1}\\
    \frac{d }{d\sigma}\Big[\frac{\partial G}{\partial K}(f^*(\sigma),K^*(\sigma))\Big]  = 0 \label{eq:single_first_order_2}
    \end{align}
 
    We expand Eq.~\eqref{eq:single_first_order_1}-Eq.\eqref{eq:single_first_order_2}  into Eq.~\eqref{eq:single_all_dev_sys_1} -Eq.\eqref{eq:single_all_dev_sys_2}. The solution to these equations enables us to analyze how optimal frequency and capacity respond to changes in the standard deviation of ridership.
    \begin{align}
        \frac{\partial^2 G}{\partial f^2} \frac{d f^*}{d\sigma} 
    + \frac{\partial^2 G}{\partial f \partial K} \frac{d K^*}{d\sigma} 
    + \frac{\partial^2 G}{\partial f \partial \sigma} = 0 \label{eq:single_all_dev_sys_1}\\ 
    \frac{\partial^2 G}{\partial K \partial f} \frac{d f^*}{d\sigma} 
    + \frac{\partial^2 G}{\partial K^2} \frac{d K^*}{d\sigma} 
    + \frac{\partial^2 G}{\partial K \partial \sigma} = 0 \label{eq:single_all_dev_sys_2}
    \end{align}
    
    We compute the second-order derivatives:
    \begin{align}
        & \frac{\partial^2 G}{\partial f^2} =  \pi_w f^{-3} E(X_B) - \pi_w f^{-2} \frac{\partial E(X_B)}{\partial f} + \left(\pi_w \frac{1}{2 f} + \pi_v \frac{T}{2} \right) \frac{\partial^2 E(X_B)}{\partial f^2} + \Pi_R \frac{\partial^2 E(X_R)}{\partial f^2}, \\
        &\frac{\partial^2 G}{\partial f \partial K} = \frac{\partial^2 G}{\partial K \partial f} = -\pi_w \frac{1}{2 f^2} \frac{\partial E(X_B)}{\partial K} + \left(\pi_w \frac{1}{2f} + \pi_v \frac{T}{2} \right) \frac{\partial^2 E(X_B)}{\partial f \partial K} + T\gamma_1 + \Pi_R \frac{\partial^2 E(X_R)}{\partial f \partial K} ,\\
        &\frac{\partial^2 G}{\partial K^2} = \left(\pi_w \frac{1}{2f} + \pi_v \frac{T}{2} \right) \frac{\partial^2 E(X_B)}{\partial K^2} + \Pi_R \frac{\partial^2 E(X_R)}{\partial K^2},\\
        &\frac{\partial^2 G}{\partial f \partial \sigma} = -\pi_w 
        \frac{1}{2 f^2}\frac{\partial E(X_B)}{\partial \sigma} + \left(\pi_w \frac{1}{2f} + \pi_v \frac{T}{2} \right) \frac{\partial^2 E(X_B)}{\partial f \partial \sigma} + \Pi_R \frac{\partial^2 E(X_R)}{\partial f \partial \sigma}\\
        &\frac{\partial^2 G}{\partial K \partial \sigma} =  \left( \frac{\pi_w}{2f} + \pi_v \frac{T}{2} \right) \frac{\partial^2 E(X_B)}{\partial K \partial \sigma} + \Pi_R \frac{\partial^2 E(X_R)}{\partial K \partial \sigma}
    \end{align}

    Given the derivative of ridership as Eq.~\eqref{eq:X_B_dev_freq} - Eq.~\eqref{eq:X_R_dev_cap}, we obtain the second order derivatives of ridership.
    \begin{align}
        &\frac{\partial E^2(X_B)}{\partial f^2}= -\frac{\partial^2 E(X_R)}{\partial f^2} = -K^2 \mathbf{f}(Q) \label{eq:X_B_2dev_freq}\\
        &\frac{\partial E^2(X_B)}{\partial K^2} = -\frac{\partial E^2(X_R)}{\partial K^2} = -f^2 \mathbf{f}(Q) \label{eq:X_B_2dev_cap}\\
        &\frac{\partial E^2(X_B)}{\partial f \partial K} = \frac{\partial E^2(X_B)}{\partial K \partial f} = -\frac{\partial E^2(X_R)}{\partial f \partial K} = -\frac{\partial E^2(X_R)}{\partial K \partial f} = P_R - f K \mathbf{f}(Q) \label{eq:X_B_2dev_cap_freq}
    \end{align}
    
    We simplify the second order derivatives of total costs with the equations above into
    \begin{align}
        & \frac{\partial^2 G}{\partial f^2} =  \pi_w f^{-3} E(X_B) - \pi_w f^{-2} K P_R -\Omega K^2 \mathbf{f}(Q), \\
        &\frac{\partial^2 G}{\partial f \partial K} = \frac{\partial^2 G}{\partial K \partial f} = - \frac{\pi_w}{2 f} P_R + \Omega (P_R - f K \mathbf{f}(Q)) + T\gamma_1 = -\frac{\pi_w}{2 f} P_R   - \Omega Q \mathbf{f}(Q),\\
        &\frac{\partial^2 G}{\partial K^2} = -\Omega f^2 \mathbf{f}(Q) ,\\
        &\frac{\partial^2 G}{\partial f \partial \sigma} = -\frac{\pi_w} {2 f^{2}} \frac{\partial E(X_B)}{\partial \sigma} - \Omega K \frac{\partial \mathbf{F}}{\partial \sigma}\\
        &\frac{\partial^2 G}{\partial K \partial \sigma} =  - \Omega f \frac{\partial \mathbf{F}}{\partial \sigma}
    \end{align}
    
    Given the second-order derivatives above, the solution to the Eq.~\eqref{eq:single_first_order_1} and Eq.~\eqref{eq:single_first_order_2} is 
    \begin{align}
        &\frac{d f^*}{d\sigma} = \frac{\frac{\pi_w}{2}\Omega \left[P_R \frac{\partial F}{\partial \sigma} -  \frac{\partial E(X_B)}{\partial \sigma}\mathbf{f}(Q)\right]}{\pi_w(-\Omega) f^{-1} E(X_B) \mathbf{f}(Q) - \frac{\pi_w^2}{4} f^{-2} P_R^2} \label{eq:single_opt_freq_sigma}\\
        &\frac{d K^*}{d\sigma} = \frac{\frac{\pi_w^2}{4} f^{-3} P_R \frac{\partial E(X_B)}{\partial \sigma}+\pi_w \Omega f^{-2} E(X_B) \frac{\partial \mathbf{F}}{\partial \sigma} - \frac{\pi_w}{2}\Omega f^{-1} K  \left[P_R \frac{\partial F}{\partial \sigma} -  \frac{\partial E(X_B)}{\partial \sigma}\mathbf{f}(Q)\right]}{\pi_w(-\Omega) f^{-1} E(X_B) \mathbf{f}(Q) - \frac{\pi_w^2}{4} f^{-2} P_R^2} \label{eq:single_opt_cap_sigma}
    \end{align}
    
    Eq.~\eqref{eq:single_opt_freq_sigma}-\eqref{eq:single_opt_cap_sigma} have the same denominator. The denominator is positive when the rejection penalty is significant. For the numerator, both equations depend on the term:
    \begin{align}
        \left[P_R \frac{\partial F}{\partial \sigma} -  \frac{\partial E(X_B)}{\partial \sigma}\mathbf{f}(Q)\right] \label{eq:common_term_dev}
    \end{align}
    According to Eq.~\eqref{eq:first_order_base2}, when the rejection penalty is sufficiently large, $P_R$ approaches zero, as well as $f(Q)$. This leads to the disappearance of Eq.~\eqref{eq:common_term_dev}.  Consequently, it causes the numerator in Eq.~\eqref{eq:single_opt_freq_sigma} to approach zero. This concludes that \textbf{ the optimal frequency is maintained with an increase in the standard deviation of ridership}. 
    
    Given a sufficiently large penalty in $\Omega$,  the first term in the numerator and the second term in the denominator of Eq.~\eqref{eq:single_opt_cap_sigma} are negligible. The third term in the numerator approaches zero, as explained in the analysis of the behavior of the optimal frequency. Therefore, it can be simplified as 
    \begin{align}
        \frac{-\frac{\partial \mathbf{F}}{\partial \sigma}}{f \mathbf{f}(Q)}
    \end{align}
    In the equation above, the $-\frac{\partial F}{\partial \sigma}$ is positive according to Assumption(\ref{ASP:pax_dis}), and $\mathbf{f}(Q)$ approaches zero when the rejection penalty is significant. These result in an increase in the optimal bus capacity with the standard deviation of ridership. This concludes that \textbf{capacity accommodates the standard deviation of ridership}. 
    \end{proof}

\section{ Completion of the proof of Theorem \ref{Thm:high-var-case}}
\label{apdx:comp_thm_high}
\begin{proof}
     Let us start with the total costs of the hybrid model, which is 
     \begin{align}
         &\text{min} \quad  \underbrace{\left( \frac{\pi_w}{2} \omega\frac{N(2L_x + D)}{s_F V} + \frac{\pi_v  (L_x + \frac{D}{2}) }{V} \right) E(X_F)} _{\text{Users' Costs of FBs}}
        + \underbrace{s_F(\gamma_0 + \gamma_1 K_F)}_{\text{Operators' Costs of FBs}}  \nonumber\\
        &\quad +\underbrace{ \Bigl(\frac{\pi_w}{2f_C} +   \frac{\pi_v L_x}{V} \Bigr)E(X_C)}_{\text{Users' Costs of CBs}}
        +\underbrace{N(2f_C\frac{L_x}{V} )(\gamma_0 + \gamma_1 K_C)}_{\text{Operators' Costs of CBs}}
        +\underbrace{E(X_R)\Pi_{REJ}}_{\text{Rejection Costs}}
     \end{align}
     
     We determine the optimal fleet size of FBs using first-order conditions. Derivatives of FBs costs with respect to $s_F$ and $K_F$ are
\begin{align}
    &-\frac{\pi_w}{2} \omega\frac{N(2L_x + D)}{s_F^2 V} E(X_F) + \frac{\pi_w}{2} \omega\frac{N(2L_x + D) }{s_F V} \frac{\partial E(X_F)}{\partial s_F} + \frac{\pi_v (L_x + \frac{D}{2})}{V} \frac{\partial E(X_F)}{\partial s_F}  + (\gamma_0 + \gamma_1 K_F)\\
    &\left( \frac{\pi_w}{2} \omega \frac{N(2L_x + D)}{s_F V} + \frac{\pi_v  (L_x + \frac{D}{2}) }{V} \right) \frac{\partial E(X_F)}{\partial K_F} + s_F \gamma_1
\end{align}
The derivative of rejection cost with respect to $s_F$ and $K_F$ are $ \Pi_R \frac{\partial E(X_R)}{\partial s_F}$ and $ \Pi_R \frac{\partial E(X_R)}{\partial K_F}$. The ridership of FBs and rejected is
\begin{align}
    &E(X_F) = \int_{Q_{CB}}^{Q_{HY}} (x - Q_{CB}) \mathbf{f}(x) dx + P_R Q_{FB}  \\   
    & E(X_R) = \int_{Q_{HY}}^{\infty} (x -Q_{HY}) \mathbf{f}(x) dx 
\end{align}
Derivatives of ridership with respect to the $s_F$ and $K_F$ are
\begin{align}
    &\frac{\partial E(X_F)}{\partial s_F} =  K_F \frac{V}{\omega (2L_x + D)} \left(1 + \frac{D}{L_y} \right) P_R \label{eq:dev_serve_s_F_rid_FB}\\
    &\frac{\partial E(X_R)}{\partial s_F} = -K_F \frac{V}{\omega(2L_x + D)} \left(1 + \frac{D}{L_y}\right)  P_R \label{eq:dev_rej_s_F_rid_FB}\\
    &\frac{\partial E(X_F)}{\partial K_F} =  s_F\frac{ V}{\omega(2L_x + D)} \left(1 + \frac{D}{L_y}\right) P_R \label{eq:dev_serve_s_F_FB}\\
    &\frac{\partial E(X_R)}{\partial K_F} = - s_F \frac{ V}{\omega(2L_x + D)} \left(1 + \frac{D}{L_y}\right) P_R \label{eq:dev_rej_s_F_FB}
\end{align}
Under first-order conditions, we have 
\begin{align}
    &-\frac{\pi_w}{2} \omega\frac{N(2L_x + D)}{s_F^2 V} E(X_F) + \Omega_F \frac{\partial E(X_F)}{\partial s_F}  + (\gamma_0 + \gamma_1 K_F)  = 0 \label{eq:thm_high_first_order_s_f}\\
    &\Omega_F \frac{\partial E(X_F)}{\partial K_F} + s_F \gamma_1 = 0 \label{eq:thm_high_first_order_K_f}\\
    &\Omega_F =  \frac{\pi_w}{2} \omega \frac{N(2L_x + D)}{s_F V} + \frac{\pi_v  (L_x + \frac{D}{2}) }{V} - \Pi_R \\
    &\frac{\partial E(X_F)}{\partial K_F} \frac{K_F}{s_F} = \frac{\partial E(X_F)}{\partial s_F}
\end{align}
Replacing Eq.\eqref{eq:thm_high_first_order_K_f} with Eq.\eqref{eq:dev_serve_s_F_FB} yields 
\begin{align}
  &\left(\frac{\pi_w}{2} \omega\frac{N(2L_x + D)}{s_F V} + \frac{\pi_v  (L_x + \frac{D}{2}) }{V} - \Pi_R \right) \frac{ V}{\omega(2L_x + D)} \left(1 + \frac{D}{L_y}\right) P_R +  \gamma_1 = 0  \label{eq:OPT_S_F_1}
\end{align}
Rearranging Eq.\eqref{eq:OPT_S_F_1}, we obtain
\begin{align}
  & s_F^* = \frac{\frac{\pi_w N}{2} \left(1 + \frac{D}{L_y}\right) P_R}{(\frac{\Pi_R V}{\omega(2L_x + D)} -  \frac{\pi_v}{2}) \left(1 + \frac{D}{L_y}\right) P_R - \gamma_1} \label{eq:OPT_S_F_2}
\end{align}

Now we have the formula of the optimal fleet size for FBs. Let us study the derivative of the total costs with respect to $f_C$ and $K_C$. The derivative of CBs costs with respect to $f_C$ is 
\begin{align}
     \frac{-\pi_w}{2f_C^2} E(X_C) + \left(\frac{\pi_w}{2f_C}+\frac{\pi_v L_x}{2 V}\right) \frac{\partial E(X_C)}{\partial f_C} + 2 N \frac{L_x}{V}(\gamma_0 + \gamma_1 K_C) 
\end{align}
The derivative of FBs costs with respect to $f_C$ is 
\begin{align}
    \left( \frac{\pi_w N}{2 s_F V}  + \frac{\pi_v}{2 V} \right)E(X_F) \frac{\partial D}{\partial f_C}  + \left( \frac{\pi_w}{2} \omega\frac{N(2L_x + D)}{s_F V} + \frac{\pi_v  (L_x + \frac{D}{2}) }{V} \right) \frac{\partial E(X_F)}{\partial f_C}
\end{align}
where 
\begin{align}
    & \frac{\partial D}{\partial f_C} = L_y N K_C\frac{1 + \mathbf{F}^2(Q_{CB})  }{(1 - \mathbf{F}^2(Q_{CB}))^2 }  \mathbf{f}(Q_{CB})  \\
    &\frac{\partial E(X_C)}{\partial f_C} =  N K_C P_C\\
    &\frac{\partial E(X_F)}{\partial f_C} = -N K_C \left( P_C - P_R \right)
\end{align}

The derivative of rejection costs with respect to $f_C$ is 
\begin{align}
    \Pi_R (-N K_C) P_R
\end{align}

The derivative of CBs costs with respect to $K_C$ is 
\begin{align}
    \Bigl(\frac{\pi_w}{2f_C} +   \frac{\pi_v L_x}{V} \Bigr)\frac{\partial E(X_C)}{\partial K_C} + 2 N f_C \frac{L_x}{V} \gamma_1
\end{align}

The derivative of FBs costs with respect to $K_C$ is 
\begin{align}
    \left( \frac{\pi_w N}{2 s_F V}  + \frac{\pi_v}{2 V} \right)E(X_F) \frac{\partial D}{\partial K_C}  + \left( \frac{\pi_w}{2} \omega\frac{N(2L_x + D)}{s_F V} + \frac{\pi_v  (L_x + \frac{D}{2}) }{V} \right) \frac{\partial E(X_F)}{\partial K_C}
\end{align}
where
\begin{align}
    &\frac{\partial D}{\partial K_C} =  L_y N f_C \frac{1 + \mathbf{F}^2(Q_{CB})  }{(1 - \mathbf{F}^2(Q_{CB}))^2 }  \mathbf{f}(Q_{CB})\\
    &\frac{\partial E(X_C)}{\partial K_C}  = N f_C P_C\\
    & \frac{\partial E(X_F)}{\partial K_C} = -N f_C \left( P_C - P_R \right)
\end{align}

The relation between the derivatives of detours $D$ and $E(X_F)$ with respect to $K_C$ and $f_C$ are
\begin{align}
    &\frac{\partial D}{\partial K_C} = \frac{f_C}{K_C} \frac{\partial D}{\partial f_C} \label{eq:ratio_D}\\
    &\frac{\partial E(X_C)}{\partial K_C} = \frac{f_C}{K_C} \frac{\partial E(X_C)}{\partial f_C}  \label{eq:ratio_X_C} \\
    &\frac{\partial E(X_F)}{\partial K_C}  = \frac{f_C}{K_C} \frac{\partial E(X_F)}{\partial f_C} \label{eq:ratio_X_F}
\end{align}

The derivative of rejection costs with respect to $K_C$ is 
\begin{align}
    \Pi_R (-N f_C) P_R
\end{align}

Setting the derivative of the total costs with respect to $f_C$ and $K_C$ to zero, we obtain
\begin{align}
    &\frac{-\pi_w}{2f_C^2} E(X_C) + \left(\frac{\pi_w}{2f_C}+\frac{\pi_v L_x}{2 V}\right) \frac{\partial E(X_C)}{\partial f_C} + 2 N \frac{L_x}{V}(\gamma_0 + \gamma_1 K_C) +\left( \frac{\pi_w N}{2 s_F V}  + \frac{\pi_v}{2 V} \right)E(X_F) \frac{\partial D}{\partial f_C}  \notag \\
    &+ \left( \frac{\pi_w}{2}\omega \frac{N(2L_x + D)}{s_F V} + \frac{\pi_v  (L_x + \frac{D}{2}) }{V} \right) \frac{\partial E(X_F)}{\partial f_C} + \Pi_R (-N K_C) P_R = 0 \label{eq:dev_f_C_hybrid_1} \\
    \notag \\
    &\Bigl(\frac{\pi_w}{2f_C} +   \frac{\pi_v L_x}{V} \Bigr) \frac{\partial E(X_C)}{\partial K_C} + 2 N f_C \frac{L_x}{V} \gamma_1 + \left( \frac{\pi_w N}{2 s_F V}  + \frac{\pi_v}{2 V} \right) E(X_F) \frac{\partial D}{\partial K_C}   \notag \\
    &+ \left( \frac{\pi_w}{2} \omega \frac{N(2L_x + D)}{s_F V} + \frac{\pi_v  (L_x + \frac{D}{2}) }{V} \right) \frac{\partial E(X_F)}{\partial K_C} + \Pi_R (-N f_C) P_R = 0 \label{eq:dev_K_C_hybrid_1}
\end{align}

Substituting Eq.~\eqref{eq:ratio_D} - Eq.~\eqref{eq:ratio_X_F} into Eq.~\eqref{eq:dev_K_C_hybrid_1}, we obtain
\begin{align}
    &\Bigl(\frac{\pi_w}{2f_C} +   \frac{\pi_v L_x}{V} \Bigr) \frac{f_C}{K_C} \frac{\partial E(X_C)}{\partial f_C} + 2 N f_C \frac{L_x}{V} \gamma_1 + \left( \frac{\pi_w N}{2 s_F V}  + \frac{\pi_v}{2 V} \right) E(X_F) \frac{f_C}{K_C}\frac{\partial D}{\partial f_C}   \notag \\
    &+ \left( \frac{\pi_w}{2} \omega\frac{N(2L_x + D)}{s_F V} + \frac{\pi_v  (L_x + \frac{D}{2}) }{V} \right) \frac{f_C}{K_C}\frac{\partial E(X_F)}{\partial f_C} + \Pi_R (-N f_C) P_R = 0 
\end{align}
which can be simplified as 
\begin{align}
    &\Bigl(\frac{\pi_w}{2f_C} +   \frac{\pi_v L_x}{V} \Bigr)  \frac{\partial E(X_C)}{\partial f_C} + \left( \frac{\pi_w N}{2 s_F V}  + \frac{\pi_v}{2 V} \right) E(X_F) \frac{\partial D}{\partial f_C}   \notag \\
    &+ \left( \frac{\pi_w}{2}\omega \frac{N(2L_x + D)}{s_F V} + \frac{\pi_v  (L_x + \frac{D}{2}) }{V} \right) \frac{\partial E(X_F)}{\partial f_C} =   K_C \left( \Pi_R N P_R - 2 N  \frac{L_x}{V} \gamma_1 \right) \label{eq:dev_K_C_hybrid_2}
\end{align}
Replacing Eq.~\eqref{eq:dev_f_C_hybrid_1} with Eq.~\eqref{eq:dev_K_C_hybrid_2} yields
\begin{align}
    &\frac{-\pi_w}{2f_C^2} E(X_C) + 2 N \frac{L_x}{V}(\gamma_0 + \gamma_1 K_C) + K_C \left( \Pi_R N P_R - 2 N  \frac{L_x}{V} \gamma_1 \right) - \Pi_R N K_C P_R = 0\\
    &\frac{-\pi_w}{2f_C^2} E(X_C) + 2 N \frac{L_x}{V}\gamma_0  = 0
\end{align}

which can be rearranged as 
\begin{align}
     f_C^2   = \frac{\pi_w  E(X_C)}{4 N \frac{L_x}{V}\gamma_0}
\end{align}

Let us analyze the behavior of the optimal fleet size of CBs and FBs, which are 
\begin{align}
    & s_F^* = \frac{\frac{\pi_w N}{2} }{\frac{\Pi_R V}{\omega(2L_x + D)} - \frac{\pi_v}{2} - \frac {\gamma_1}{\left(1 + \frac{D}{L_y}\right) P_R}} \label{eq:OPT_S_F_FINAL} \\
    & f_C^* = \sqrt{\frac{\pi_w V  E(X_C)}{4 N L_x \gamma_0}}
\end{align}

The probability of being underserved by CBs is bounded by the probability of rejections, as 
\begin{align}
    & Q_{CB} \leq Q_{HY} \\
    & P_C = 1-\mathbf{F}(Q_{CB}) \geq 1-\mathbf{F}(Q_{HY}) = P_R > 0
\end{align}

  $Q_{CB} = 0$ if and only if $P_C = 1 - \mathbf{F}(Q_{CB}) = 0$. However, $P_C \geq P_R > 0$, then $Q_{CB} > 0$. It can be concluded that we always have CBs in the hybrid model when there are rejections. \\
  According to Corollary \ref{Coro:low-var-1}, when ridership randomness is low and detours are excessive, Eq.~\eqref{eq:OPT_S_F_FINAL} yields a negative value for the optimal fleet size of FBs. However, due to the non-negative constraint, the actual optimal fleet size is zero. When the ridership randomness is high, detours are no longer excessive, and the rejection penalty is significant enough to make the denominator of the right-hand side of Eq.~\eqref{eq:OPT_S_F_FINAL} positive. Since the numerator of the right-hand side of Eq.~\eqref{eq:OPT_S_F_FINAL} is positive, the optimal fleet size of FBs is positive. \\
  In conclusion, with large ridership randomness, the optimal fleet sizes of CBs and FBs are positive.

\end{proof}

\section{ Further numerical experiments}
\label{appd:more_exps}
The optimal design performs similarly, with the standard deviation increasing from 0 to 1,000 (pax/h) and the mean ridership remaining at 1,000 (pax/h). The results are illustrated in Figure~\ref{fig:OP-INPUT-SINGLE-1000}. The ridership follows a log-normal distribution. The optimal frequency stays almost constant with the standard deviation, which decreases by approximately 0.5 vehicles with a standard deviation from 0 to 1,000 passengers per hour. This results from the rejection probability increasing with the ridership randomness, and fewer passengers are served by CBs. To maintain the systematic capacity to prevent rejections, the optimal bus capacity increases with the ridership standard deviation. Given a substantial rejection penalty and the increase in the rejected passengers, the total costs of the extended single-line model increase with the ridership randomness.
\begin{figure}[htbp]
    \centering
    \includegraphics[width = \textwidth]{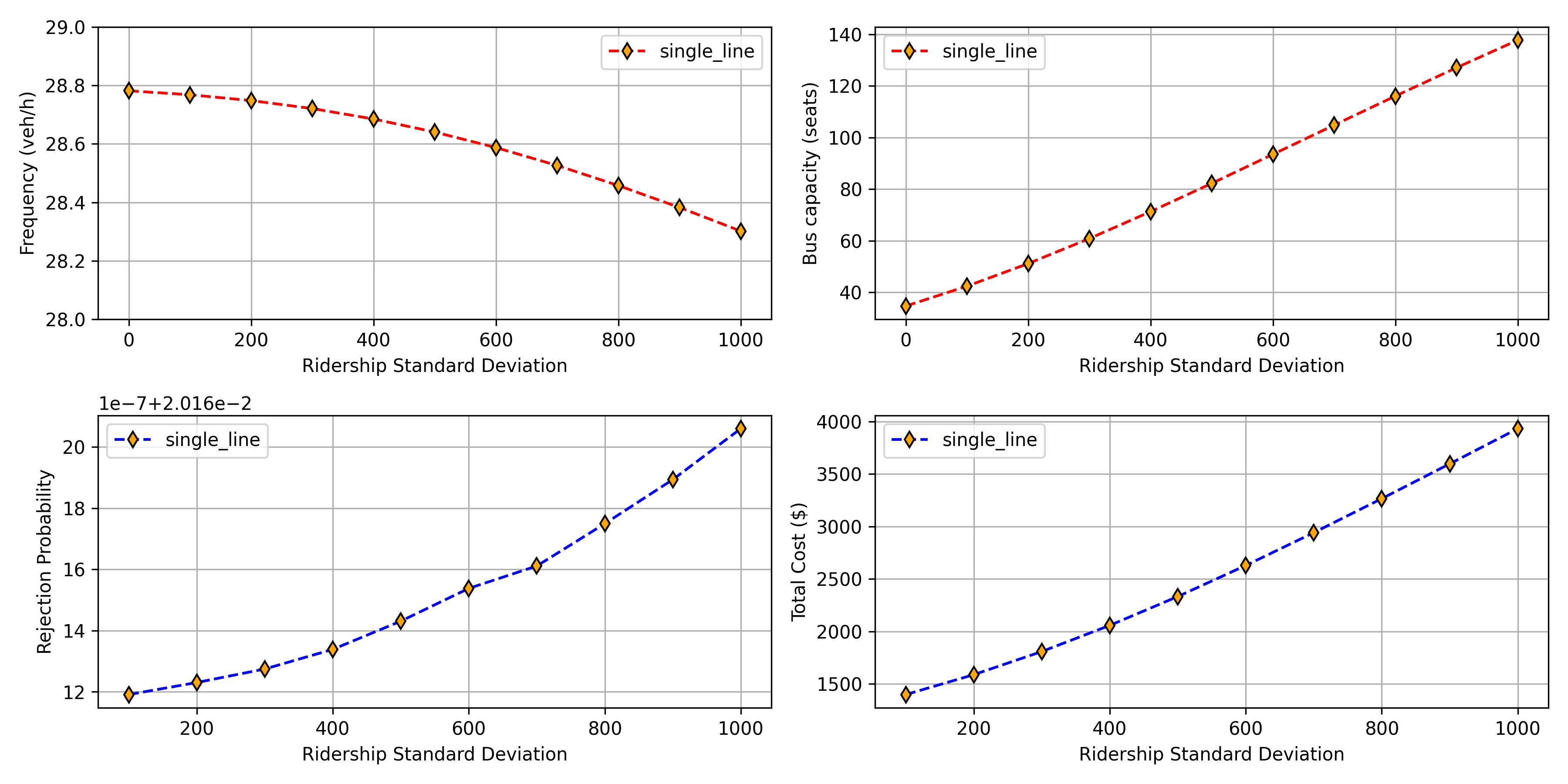}
    \caption{Optimal design variables for the extended single-line model with mean ridership of 1,000 passengers per hour}
    \label{fig:OP-INPUT-SINGLE-1000}
\end{figure}

The hybrid model has a similar behavior of the optimal frequency and bus capacity in the experiment, of which the mean ridership is 1,000 passengers per hour. As shown in Fig~\ref{fig:OP-INPUT-Hybrid-1000}, the frequency of CBs stays almost constant with the ridership variance. This agrees with the theoretical analysis that the frequency of CBs is determined by the served ridership. Given the optimal frequency is determined by the served ridership, the hybrid system increases the bus capacity of CBs for ridership randomness at first. However, the integration of FBs is more economical than increasing capacity when the ridership randomness is high. The system operates in a binary state: either CBs serve all ridership, or CBs handle the majority, while FBs address the minority portion. Our hybrid bus system outperforms the benchmark model, as shown in Figure.~\ref{fig:OP-COM-Hybrid-1000}.

\begin{figure}[htbp]
    \centering
    \includegraphics[width = \textwidth]{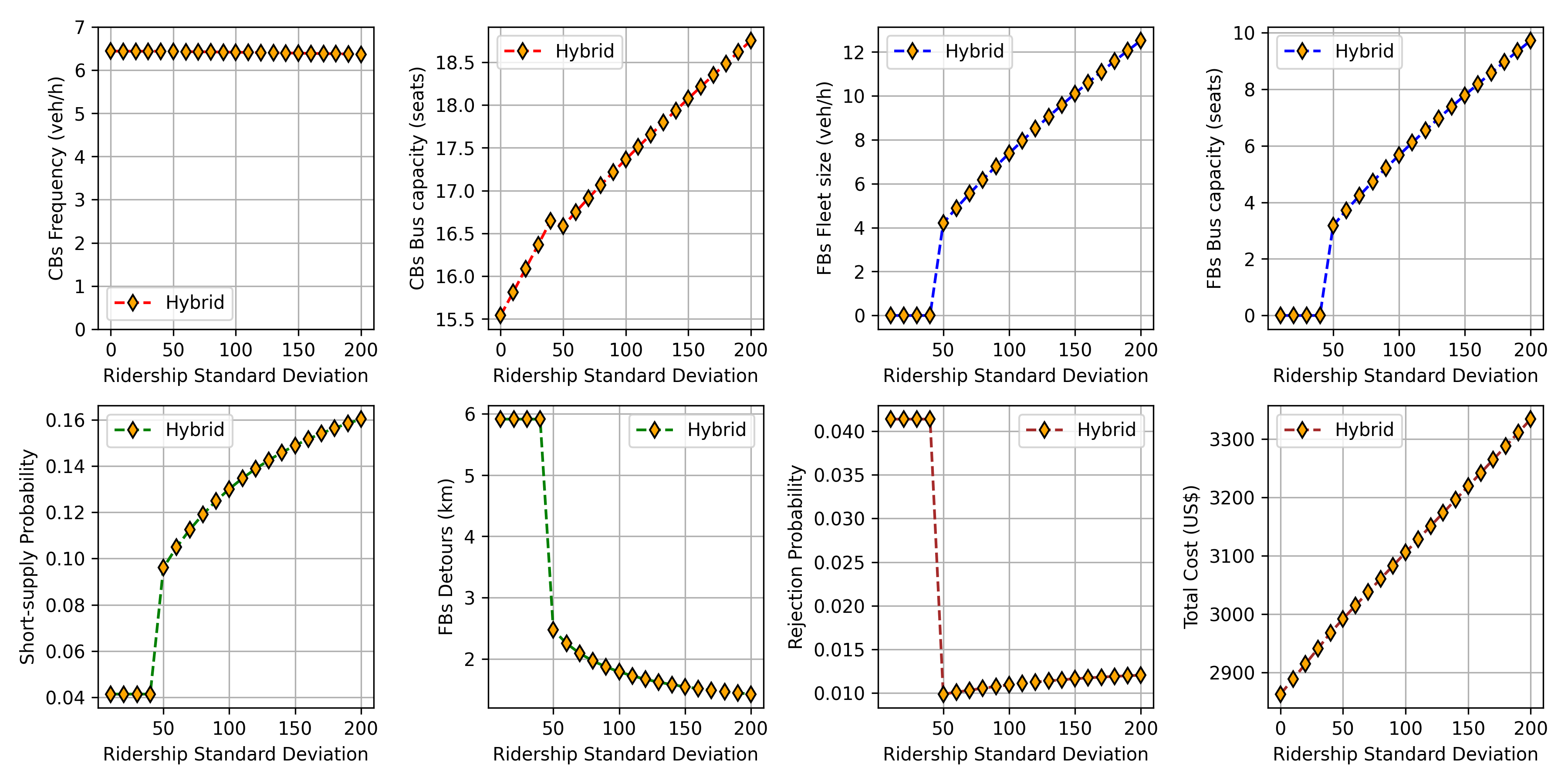}
    \caption{Optimal design variables for the hybrid model with mean ridership of 1,000 passenger per hour}
    \label{fig:OP-INPUT-Hybrid-1000}
\end{figure}

\begin{figure}[htbp]
    \centering
    \includegraphics[width = 0.8\textwidth]{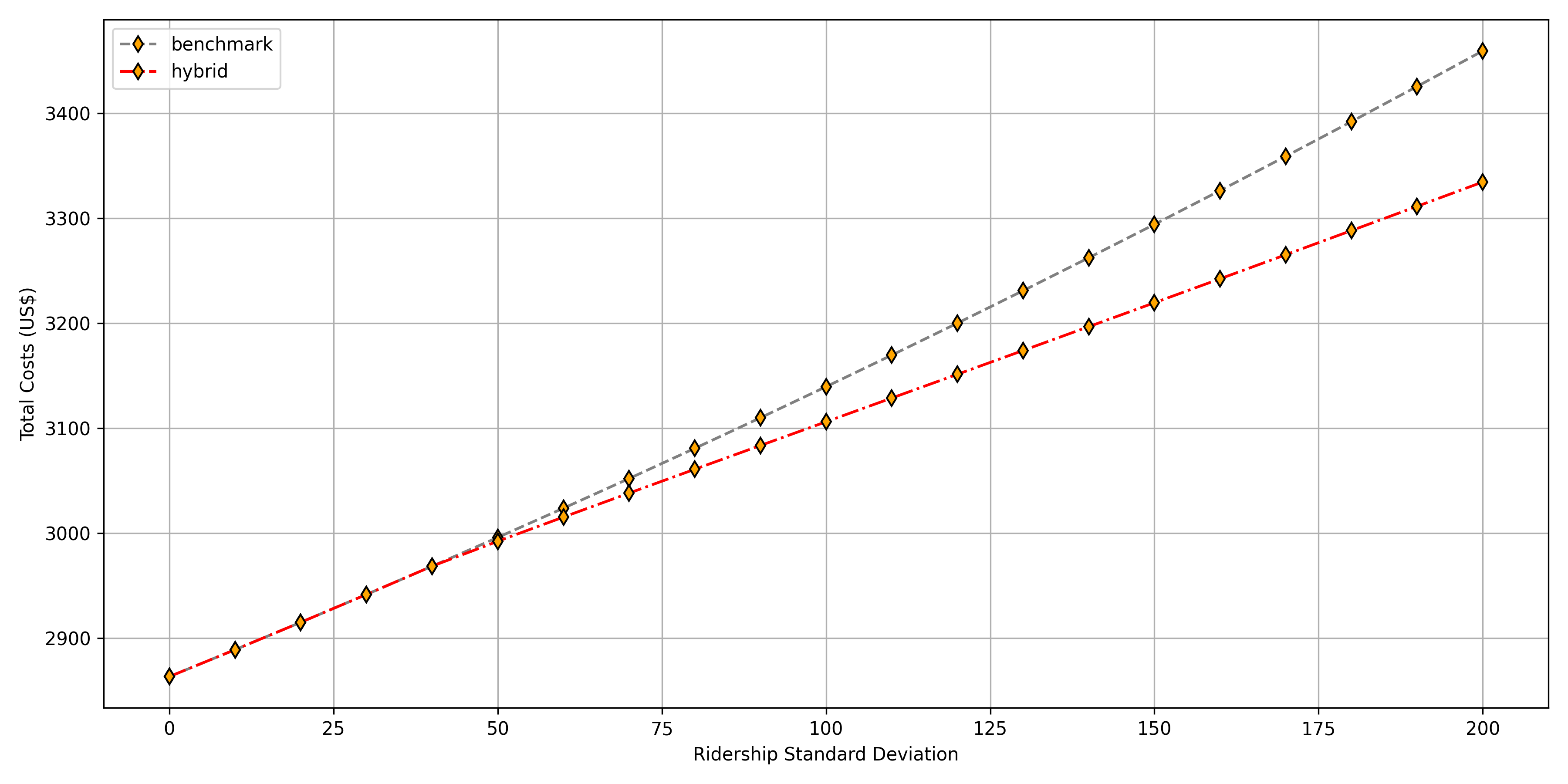}
    \caption{Comparison of the hybrid model to the benchmark model with mean ridership of 1,000 passenger per hour}
    \label{fig:OP-COM-Hybrid-1000}
\end{figure}

\newpage
\printbibliography

\end{document}